\newcommand{\Mod}[1]{\ (\mathrm{mod}\ #1)}
\DeclareMathOperator{\rtop}{rt}
\DeclareMathOperator{\qrtop}{qrt} 
\DeclareMathOperator{\rtlop}{rt_{LOCC}}
\DeclareMathOperator{\rttop}{rt_{tele}}
\DeclareMathOperator{\rtsop}{rt_{\swap{}}}
\DeclareMathOperator{\diam}{diam}
\DeclareMathOperator{\advtg}{adv}
\newcommand{\maxadvtg}{\advtg^*}
\DeclareMathOperator{\poly}{poly}
\newcommand{\lapl}{\mathcal{L}}
\DeclarePairedDelimiterXPP\bigO[1]{O}{(}{)}{}{#1}
\DeclarePairedDelimiterXPP\bigomega[1]{\Omega}{(}{)}{}{#1}
\DeclarePairedDelimiterXPP\bigtheta[1]{\Theta}{(}{)}{}{#1}
\DeclarePairedDelimiterXPP\rt[1]{\rtop}{(}{)}{}{#1}
\DeclarePairedDelimiterXPP\qrt[1]{\qrtop}{(}{)}{}{#1}
\DeclarePairedDelimiterXPP\rtl[1]{\rtlop}{(}{)}{}{#1}
\DeclarePairedDelimiterXPP\rtt[1]{\rttop}{(}{)}{}{#1}
\DeclarePairedDelimiterXPP\rts[1]{\rtsop}{(}{)}{}{#1}
\newcommand{\divides}{\mid}
\newcommand{\swap}{swap}
\crefname{section}{Sec.}{Secs.}
\newtheorem{theorem}{Theorem}[section]
\newtheorem{lemma}[theorem]{Lemma}
\newtheorem{proposition}[theorem]{Proposition}
\theoremstyle{definition}
\newtheorem{definition}[theorem]{Definition}
\begin{document}

\title{Quantum Routing with Teleportation}

\author{Dhruv Devulapalli}
    \email{ddhruv@umd.edu}
    \affiliation{Joint Center for Quantum Information and Computer Science, 
        NIST/University of Maryland, College Park, MD 20742, USA}
    \affiliation{Joint Quantum Institute, NIST/University of Maryland, 
        College Park, MD 20742, USA}

\author{Eddie Schoute}
    \affiliation{Computer, Computational, and Statistical Sciences Division, Los Alamos National Laboratory, Los Alamos, NM 87545, USA}
    \affiliation{Joint Center for Quantum Information and Computer Science, 
        NIST/University of Maryland, College Park, MD 20742, USA}
    \affiliation{Institute for Advanced Computer Studies, University of Maryland, College Park,
        MD 20742, USA}
    \affiliation{Department of Computer Science, University of Maryland, College Park,
        MD 20742, USA}

\author{Aniruddha Bapat}
    \affiliation{Joint Center for Quantum Information and Computer Science, 
        NIST/University of Maryland, College Park, MD 20742, USA}
    \affiliation{Joint Quantum Institute, NIST/University of Maryland, 
        College Park, MD 20742, USA}
    \affiliation{Lawrence Berkeley National Laboratory, Berkeley, CA 94720, USA}

\author{Andrew M.\ Childs}
    \affiliation{Joint Center for Quantum Information and Computer Science, 
        NIST/University of Maryland, College Park, MD 20742, USA}
    \affiliation{Institute for Advanced Computer Studies, University of Maryland, College Park,
        MD 20742, USA}
    \affiliation{Department of Computer Science, University of Maryland, College Park,
        MD 20742, USA}
        
\author{Alexey V.\ Gorshkov}
    \affiliation{Joint Center for Quantum Information and Computer Science, 
        NIST/University of Maryland, College Park, MD 20742, USA}
    \affiliation{Joint Quantum Institute, NIST/University of Maryland, 
        College Park, MD 20742, USA}

\date{\today}

\begin{abstract}
We study the problem of implementing arbitrary permutations
of qubits under interaction constraints
in quantum systems that allow for arbitrarily fast local operations and
classical communication (LOCC). In particular, we show examples of
speedups over \swap{}-based and more general unitary routing methods by distributing entanglement and using LOCC to perform
quantum teleportation. 
We further describe an example of an interaction graph for which teleportation gives
a logarithmic speedup in the worst-case routing time over \swap{}-based routing.
We also study limits on the speedup afforded by quantum teleportation---showing an $\bigO{\sqrt{N \log N}}$ upper bound
on the separation in routing time for any interaction graph---and give tighter bounds 
for some common classes of graphs.
\end{abstract}

\maketitle

\section{Introduction}
Common theoretical models of quantum computation assume that 2-qubit gates
can be performed between arbitrary pairs of qubits. 
However, in practice, scalable quantum architectures
have qubit connectivity constraints~\cite{arute, Monroe_Kim_2013}, 
which forbid long-range gates. 
These connectivity constraints are typically represented by a simple graph,
where vertices correspond to qubits, and edges indicate pairs of qubits
that can undergo 2-qubit gates. A quantum architecture with $N$ qubits is thus represented by a graph $G$ with $N$ vertices.
Circuits that use all-to-all connectivity must be transformed to new circuits that respect the architecture constraints specified by this graph.
Simple transformations introduce polynomial overhead in the worst case,
so it is crucial to lower this overhead. 

A natural approach to mapping circuits to respect interaction constraints is by permuting qubits
using \emph{routing} protocols.
Routing refers to the task of permuting packets of information, or
\emph{tokens}, on vertices of a graph. 
In \emph{quantum} routing, tokens are
data qubits, to be permuted on the graph specified by
the architecture's connectivity constraints. 
Previous work has used \swap{} gates to perform routing~\cite{cowtan, csu}, and
routing protocols from a classical setting using \swap{} 
gates~\cite{Alon1994, Chung1996, Zhang} can be naturally applied to the problem of routing quantum data as well.

Faster routing protocols can be obtained by using a wider range of quantum operations.
For example, Hamiltonian evolution can obtain a constant-factor speedup over \swap{}-based routing~\cite{reversals}.
More details of the comparisons and advantages of quantum routing models to classical routing models can be found in reference~\cite{quantum_routing}.
However, these approaches rely on locality-restricted unitary evolution, 
so the routing time is limited by the propagation speed of quantum information~\cite{lieb, quantum_routing}.

In this paper, we additionally allow for fast local operations, measurement and feedback (LOCC).
Since this model allows for fast classical communication across long distances, it is not similarly constrained by the propagation speed of quantum information.
For example, without prior shared entanglement, quantum teleportation over arbitrary distances
can be performed in constant depth by using entanglement swapping~\cite{swapping} in a quantum repeater protocol~\cite{repeater}, as shown in \cref{fig:teleportation}. Entanglement can also be distributed using quantum network coding protocols~\cite{Beaudrap_Herbert_2020}.
The ability to perform teleportation in constant depth immediately gives routing speedups over \swap{}-based methods and even over previous unitary quantum routing methods, since teleportation can be used to quickly exchange distant pairs of qubits.

\begin{figure}
    \includegraphics[width=\columnwidth]{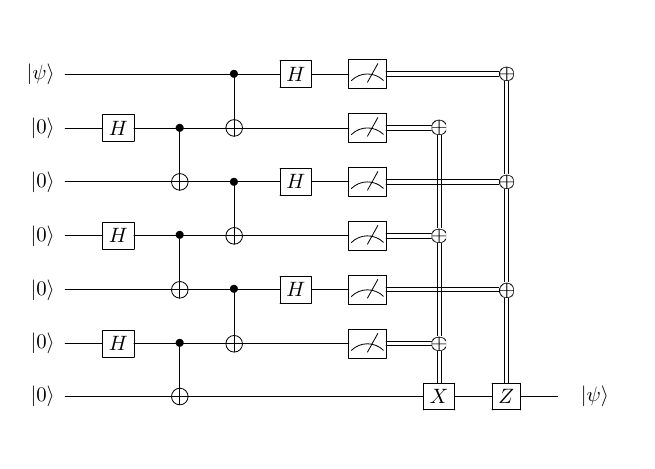}
    \caption{%
        Constant-depth long-range teleportation protocol on a path of 7 qubits. The $X$ and $Z$ gates are classically
        controlled by the parities of the two sets of measurement results. 
        This protocol can be extended to paths of any length without increasing the circuit depth.
    }
    \label{fig:teleportation}
\end{figure}

We show that using measurement and feedback to help prepare long-range entanglement can significantly decrease the time required for routing, even without using a large number of ancillas. In particular, we demonstrate the first superconstant speedup for quantum routing over swap-based routing in the setting where $\bigO{N}$ ancillas are allowed, showing a $\bigO{\log N}$ speedup for the hardest (i.e., worst-case) permutations.
Further, our main result proves the first non-trivial limits on the advantage of teleportation-based routing protocols
by an $\bigO{\sqrt{N \log N}}$ upper bound on the speedup over \swap{}-based routing.
Finally, we also show a new swap-based algorithm for sparse routing of $k$ qubits on a graph $G$ in time $\bigO{k+\diam(G)}$, where $\diam(G)$ is the diameter of $G$ 
(i.e, the maximum shortest-path distance between any pair of vertices).

LOCC has previously been useful to give low-depth implementations of specific unitaries, such as quantum fanout~\cite{Pham2013}, long-range operations on the surface code~\cite{Beverland_Kliuchnikov_Schoute_2022}, and the preparation of a wide range of entangled states~\cite{Piroli,rainbow,spt,verresen}.
In fact, previous work showed routing speedups by using ancillas~\cite{herbert} and by employing LOCC~\cite{rosenbaum}. Using teleportation,
\textcite{rosenbaum} showed a protocol that implements any permutation
in constant depth. However, Rosenbaum's protocol uses $\bigO{N^2}$ qubits
to perform permutations on $\bigO{N}$ qubits, so that only a negligible fraction of the qubits are data qubits. Engineering qubits is difficult, so it is preferable to use as many of them as possible as data qubits to enable larger computations.
Therefore, in this work we consider a more modest $\bigO{1}$ ancillas per data qubit (i.e. there are $\bigO{N}$ ancillas in total).
The availability of $\bigO{1}$ ancillas per data qubit is natural in some quantum systems, such as in NV center qubits~\cite{nvcenter}, 
quantum dots~\cite{dots}, and trapped ions~\cite{ions}.
To our knowledge, this is the first work to study quantum routing with measurement and feedback in the restricted ancilla setting.
By studying routing in this regime, we make progress on an open question posed by \textcite{herbert}, asking to what extent ancillas can be used to accelerate routing. 

Routing is more powerful than state transfer and entanglement distribution~\cite{bose_st, christandl_st}. For example, routing qubits from locally prepared Bell states can be used to generate long-range entanglement. The upper bounds in our work therefore also apply to these tasks.

Our work may be of interest to experimental efforts in systems which allow for mid-circuit measurements. In particular, the non-locality enabled by measurement and feedback makes large distances between qubits (i.e, large diameter connectivity graphs) less of a challenge for algorithm implementations.
Additionally, knowledge of (teleportation) routing may inform choices of connectivity in systems with these features.
In particular, our upper bounds can be used to compare routing overheads on different architectures based on their spectral and isoperimetric properties.

Furthermore, teleportation routing can also provide large advantages for specific permutations, which makes it useful for efficient implementations of algorithms on near-term architectures.
This is also of relevance to fault-tolerant quantum computation, as a major obstacle to the implementation of promising quantum error-correcting codes, such as qLDPC codes \cite{breuckmann_ldpc, panteleev2021quantum}, is their need for long-range syndrome measurements \cite{hong2023longrangeenhanced, Delfosse_Beverland_Tremblay_2021}. This can be alleviated by using teleportation to route together distant qubits from each syndrome. Further, protocols to prepare code states and implement logical operations in locality-restricted architectures are constrained by Lieb-Robinson bounds. Recent work \cite{friedman2022locality} has shown how the use of measurements can accelerate such tasks.  Routing schemes enabled by the use of teleportation can also be considered on fault-tolerant architectures, such as, for example, to perform logical circuits across surface code patches. More generally, the use of measurement and feedback enables speedups from the ability to implement long-range interactions quickly, which can also make algorithms much easier to run on near term architectures~\cite{Pham2013}.

Our paper is organized as follows. 
After introducing the models in \cref{sec:prelim},
we discuss known upper and lower bounds 
on the routing time for both \swap{}-based and teleportation routing in \cref{known_routing_bounds}. We also
introduce an improved algorithm for sparse routing (i.e, routing of a small subset of tokens) with \swap{}s and ancillas.
In \cref{example}, we use teleportation to speed up specific permutations. In \cref{advantage}, we compare teleportation routing to
\swap{}-based routing for \emph{arbitrary} permutations, and we give an example of a $\bigO{\log N}$-factor speedup over \swap{}-based routing. 
In \cref{bounds}, we show an $\bigO{\sqrt{N \log N}}$ upper bound on the speedup of teleportation routing over \swap{}-based routing for all graphs, and show tighter bounds
for some common classes of graphs.
Finally, we conclude in \cref{discussion} with a discussion of the results and some open questions.

\section{Preliminaries}\label{sec:prelim}
We consider architectures consisting of $N$ data qubits connected according to a simple graph 
$G$ (with vertex set $V(G)$ and (undirected) edge set $E(G)$), 
where an edge $(u,v) \in E(G)$ represents a connection between qubits $u,v \in V(G)$, and $|V(G)|=N$. 
We consider only \emph{connected} graphs, i.e., graphs in which there is a path from any vertex
to any other vertex.

We assume there are a constant number of ancillary qubits per data qubit
that can interact only with the data qubit.
Further, we assume that disjoint
two-qubit gates can be performed between adjacent qubits in depth 1.
Up to a constant overhead, this is equivalent to having fast (instantaneous) ancilla interactions
since any unitary on the data qubit and ancillas can be decomposed into a constant number of two-qubit gates.
As our results are asymptotic, they are insensitive to a constant overhead.

Ancillary qubits corresponding to different data qubits are not directly connected. However, 
gates between ancillary qubits of neighboring vertices can be performed in depth 1 by swapping ancillas with their corresponding data qubits, 
performing the desired 2-qubit gate between data qubits,
and swapping again with the ancillas.
This model can be implemented in realistic quantum architectures with
attached ancillas~\cite{nvcenter, dots, ions} as well as architectures with grid connectivity such as superconducting qubits~\cite{IBM,arute}. For example, \cref{fig:grid} shows an architecture where ancillas are interspersed with data qubits on a grid. This can be represented in our model as \cref{fig:anc}. Both models are equivalent and can simulate each other with only constant depth overhead.

\begin{figure}
    \subfloat[]{\label{fig:grid} \includegraphics[width=0.45\columnwidth]{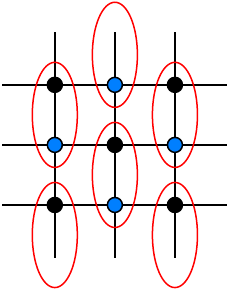}}
    \quad
    \subfloat[]{\label{fig:anc} \includegraphics[width=0.4\columnwidth]{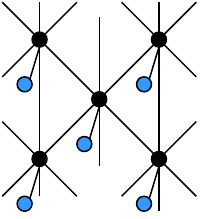}}
    
    \caption{(a) A grid architecture with ancillas (blue) interspersed between data qubits (black). The red ovals indicate which ancilla corresponds to each data qubit. (b) An equivalent architecture in our model.}
\end{figure}

The task of routing involves permuting data qubits on the graph.
We use the notation
\begin{equation}
    \{(1,\pi(1)), (2, \pi(2)), \dots, (N, \pi(N))\}
\end{equation} to denote a permutation
on $N$ vertices, where $\pi(i)$ is the vertex to which we must move the $i$th qubit.
We also write
\begin{equation}
    [N] \coloneqq \{1, 2, \dots N\}.
\end{equation}

We consider the following models of routing.
\begin{enumerate}
    \item Swap routing: In this model, the only allowed gates between adjacent qubits are \swap{} gates.
    \item LOCC routing: 
    In this model, we are allowed to perform arbitrary 2-qubit gates on disjoint pairs of qubits
    in a single time step. Further, in the same time step, we are allowed to perform single-qubit measurements (on data and ancilla qubits)
    and adaptively apply arbitrary single-qubit gates. 
    We refer to this as \emph{fast measurement and feedback.}
    Gates in later time steps can be applied adaptively, conditioned on all previous measurement
    results.
    
    \item Teleportation routing: 
     In this model, data qubits can be teleported along disjoint paths to ancilla registers at arbitrary distances in depth 1. Using this ability, a \swap{} between the ends of a path can be performed in constant depth. Note that the qubits along a teleportation path cannot be involved in any other operations during a round of teleportation.
    However, teleportation between multiple pairs of qubits can be performed in parallel 
    if there exist paths for each pair that have no more than a constant
    number of intersections per vertex, since we allow a constant number of ancilla qubits per data qubit.
     This model is a specialization of LOCC routing as the ability to 
    perform fast measurement and feedback allows us to perform quantum teleportation,
    transporting a single qubit to any vertex in constant depth.
    The entanglement required for quantum teleportation is produced using an entanglement swapping protocol~\cite{swapping}, as depicted
    in \cref{fig:teleportation}. A \swap{} between the ends of a path can be performed by teleporting the qubit at each end to the opposite end, or by performing gate teleportation~\cite{gottesman_chuang} of a \swap{} gate.

\end{enumerate}

We are particularly interested in the \emph{routing time} $\rt{G,\pi}$, which 
is the minimum circuit depth to perform the permutation $\pi$ on the data qubits of $G$.
The worst-case routing time of a graph $G$ is 
\begin{equation}
    \rt{G} \coloneqq \max_{\pi \in \mathcal{S}_{N}} \rt{G, \pi}
\end{equation}
where $\mathcal{S}_N$ is the symmetric group, i.e., the group of all permutations of $N$ elements.
We let $\rtt{G}$ denote the routing time in the teleportation model, 
$\rtl{G}$ denote the routing time in the LOCC model,
and $\rts{G}$ denote the routing time in the \swap{} model.

\section{Bounds on routing time}
\label{known_routing_bounds}
In this section, we discuss known bounds on the routing time for both \swap{} and LOCC routing.

\subsection{Lower bounds}
If a permutation can only be implemented by sending a large number of tokens through a small number 
of vertices, then any circuit for performing it must have high depth, since each vertex can only hold one token at a time. This gives a natural lower bound on the routing time.
To formalize this, we consider the \emph{vertex expansion} (or \emph{vertex isoperimetric number}) $c(G)$ of a graph $G$, defined as follows.

\begin{definition}
\label{def:vertex_expansion}
The vertex expansion of a graph $G$ is
\begin{equation}
c(G) \coloneqq \min \limits_{X \subseteq V(G)} \frac{|\delta X|}{\min\{|X|, |\overline{X}|\}},
\end{equation}
where 
\begin{equation}
    \overline{X} = V(G) - X
\end{equation} 
is the complement of $X$, and
\begin{equation}
    \delta X = \{v \in \overline{X}
\mid \exists \, u \in X \text{ s.t.\ } (u,v) \in E(G)\}
\end{equation}
is the vertex boundary of $X$.
\end{definition}
Note that $c(G) \leq 1$:
\begin{align}
&\min \limits_{X \subseteq V(G)} \frac{|\delta X|}{\min\{|X|, |\overline{X}|\}} \\
&= \min \limits_{X \subseteq V(G)}  \left( \frac{|\delta X|}{\min\{|X|, |\overline{X}|\}}, \frac{|\delta \overline{X}|}{\min\{|X|, |\overline{X}|\}} \right)
\\&=\min \limits_{X \subseteq V(G)} \frac{\min(|\delta X|, |\delta \overline{X}|)}{\min\{|X|, |\overline{X}|\}} \\ &\leq 1.
\end{align}
In addition, for a connected graph, since $|\delta X| \geq 1$ and  $\min\{|X|, |\overline{X}|\} \leq \frac{N}{2}$ for any $X$, we have $c(G) \geq \frac{2}{N}$
Therefore, for connected graphs, $c(G) \in \left[ \frac{2}{N}, 1 \right]$.

Any connected simple graph $G$ satisfies the following.

\begin{theorem}[Isoperimetric lower bound~\cite{quantum_routing}]
\label{cut_bound}
\begin{align}
    \rtl{G} &\geq \frac{2}{c(G)} -1.
\end{align}
\end{theorem}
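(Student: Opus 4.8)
The plan is to prove the bound by exhibiting a single permutation whose routing time is large, since $\rtl{G}$ is the maximum over all permutations. First I would fix a set $X \subseteq V(G)$ that achieves the minimum in the definition of $c(G)$, and assume without loss of generality that $|X| \le |\overline{X}|$, so that $|\delta X| = c(G)\,|X|$. The intuition flagged before \cref{def:vertex_expansion} is that a permutation forcing many tokens through the small boundary $\delta X$ must be slow. Concretely, I would choose a subset $Y \subseteq \overline{X}$ with $|Y| = |X|$ and take $\pi$ to exchange the tokens on $X$ with those on $Y$, fixing the remaining vertices of $\overline{X}$. Under this $\pi$, all $|X|$ tokens initially in $X$ must end up in $\overline{X}$, and all $|X|$ tokens initially on $Y$ must end up in $X$; hence at least $2|X|$ tokens must cross between $X$ and $\overline{X}$ during the protocol.

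The heart of the argument is a per-round \emph{crossing-capacity} lemma: in any single round of LOCC routing, the number of tokens that change sides of the cut $(X, \overline{X})$ is controlled by $|\delta X|$. The reason is that any token moving from $X$ into $\overline{X}$ (or vice versa) must transit a boundary vertex of $\delta X$, and each such vertex carries only $\bigO{1}$ data and ancilla qubits, so it can take part in only $\bigO{1}$ gates---or, in the teleportation model, serve as a node of only $\bigO{1}$ teleportation paths---per round. This is the step where teleportation must be handled with care: although teleportation transports a token to an arbitrarily distant vertex in constant depth, the quantum information still has to pass through $\delta X$, and the constant-ancilla budget is exactly what prevents the narrow boundary from being used by more than $\bigO{|\delta X|}$ crossings simultaneously.

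Combining the two ingredients, if each round moves at most $|\delta X| = c(G)\,|X|$ tokens across the cut, then routing all $2|X|$ necessary crossings takes at least $2|X| / |\delta X| = 2/c(G)$ rounds, and I would absorb the additive $-1$ by accounting for a boundary effect in the first (or last) round, when the boundary vertices are not yet staged to carry crossings.

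I expect the main obstacle to be pinning down the exact constant, i.e.\ obtaining the factor of $2$ rather than merely $1/c(G)$. The difficulty is that a \swap{} across a boundary edge exchanges one token in each direction through a single boundary vertex, so a crude count allows up to $2|\delta X|$ crossings per round and yields only $\rtl{G} \ge 1/c(G)$. Recovering the stated bound requires a sharper accounting---for instance, arguing that ``outward'' and ``inward'' crossings cannot both saturate the boundary while making net progress toward $\pi$, or a destination-counting argument showing that at most $|\delta X|$ tokens can reach their final side per round---and simultaneously verifying that teleportation cannot beat this boundary-limited rate. Making this capacity bound both tight (to get the $2$) and robust to long-range teleportation is, I expect, the crux of the proof.
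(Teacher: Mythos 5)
Note first that the paper does not prove this theorem at all: it is imported verbatim from \cite{quantum_routing}, with only the one-token-per-vertex intuition stated beforehand. Your outline follows the same cut-capacity strategy as that source (a minimizing set $X$, a permutation exchanging the tokens of $X$ with those of some $Y \subseteq \overline{X}$ to force $2|X|$ crossings, a capacity bound on $\delta X$), but it stalls exactly where you say it does, and the missing step is a specific counting idea rather than unspecified ``sharper accounting.'' Per-round flux through the boundary is the wrong quantity: as you yourself observe, a \swap{} across a cut edge moves one token in each direction, so flux-counting caps at $2|\delta X|$ crossings per round and yields only $T \geq 1/c(G)$. The correct move is to count \emph{occupancy over time slices}: a depth-$T$ protocol has $T+1$ token configurations; every one of the $2|X|$ tokens that changes sides must, in at least one configuration, be located on a vertex of $\delta X$ (any edge leaving $X$ has its far endpoint in $\delta X$), and each vertex holds at most one token per configuration. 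Hence $|\delta X|(T+1) \geq 2|X|$, giving $T \geq 2/c(G) - 1$ immediately. The factor of $2$ comes from outward and inward tokens being distinct occupants drawing on the \emph{same} budget $|\delta X|(T+1)$---not from doubling the per-round flux---and the $-1$ falls out of the $T+1$ slices, not from a hand-waved ``boundary effect in the first round.''

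The second genuine gap is the LOCC/teleportation robustness, which you assert (``the quantum information still has to pass through $\delta X$'') but do not establish---and the occupancy argument above fails verbatim in the LOCC model, because a teleported token never occupies any intermediate vertex. Closing this requires an entanglement-accounting step: since there is no prior shared entanglement, entanglement across the cut $(X, \overline{X})$ can be created only by gates acting across cut edges, all of which are incident to $\delta X$; each vertex of $\delta X$ carries only $\bigO{1}$ qubits and hence participates in $\bigO{1}$ cross-cut gates per round, each generating $\bigO{1}$ ebits, and teleporting one token across the cut consumes one cut-spanning ebit (entanglement swapping can relocate ebits but cannot increase their number across the cut). This recovers $T = \bigomega{1/c(G)}$, but with weakened constants relative to the stated $2/c(G)-1$; obtaining the exact constant in the measurement-and-feedback setting is precisely the delicate part inherited from the analysis in \cite{quantum_routing}, and your plan supplies no mechanism for making the occupancy count survive teleportation at that precision. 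In short, your architecture is the right one, but both items you flag as ``the crux'' are indeed missing ideas, not routine verification.
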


Since  $\rts{G} \geq \rtt{G} \geq \rtl{G}$, this lower bound applies to \swap{}- and teleportation-based routing as well.


We can also lower bound the \swap{}-based routing time by the diameter of the graph 
(i.e, the maximum shortest-path distance between any pair of vertices) since swapping two vertices at distance $d$ requires a swap circuit of depth at least $d$.

\begin{theorem}[Diameter lower bound]
\label{diam_bound}
\begin{equation}
    \rts{G} \geq \diam(G).
\end{equation}
\end{theorem}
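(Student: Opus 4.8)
The plan is to prove the diameter lower bound $\rts{G} \geq \diam(G)$ by exhibiting a single permutation whose \swap{}-based routing requires depth at least $\diam(G)$, since $\rts{G}$ is defined as the worst-case routing time over all permutations. First I would pick two vertices $u, v \in V(G)$ realizing the diameter, so that the shortest-path distance $d(u,v) = \diam(G)$, and consider the transposition $\pi$ that swaps the tokens on $u$ and $v$ while fixing all other tokens. The remaining task is to argue that any \swap{} circuit implementing $\pi$ must have depth at least $d(u,v)$.

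The key step is a ``token cannot outrun its neighborhood'' argument. In the \swap{} model, each time step consists of a disjoint set of \swap{} gates along edges of $G$, so in a single time step a token can move at most one edge along the graph. More precisely, I would track the token that starts on $u$ and must end on $v$: define its position $p_t$ after $t$ time steps, and observe that a single layer of \swap{}s either leaves the token fixed or moves it across one incident edge, so $d(u, p_{t+1}) \leq d(u, p_t) + 1$. Since $p_0 = u$ gives $d(u, p_0) = 0$, an induction yields $d(u, p_t) \leq t$. For the token to reach $v$ we need $p_T = v$, hence $d(u,v) \leq T$, so the circuit depth $T$ is at least $\diam(G)$.

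I expect the only subtlety — rather than a genuine obstacle — to be bookkeeping about which physical token we are tracking, since \swap{} gates permute the contents of vertices rather than moving a distinguished particle; but because a \swap{} merely exchanges the two qubits on an edge, the ``trajectory'' of the qubit originating at $u$ is well defined, and at each step it either stays put or crosses exactly one edge, which is all the argument needs. I would phrase this carefully so that the one-step distance bound is manifestly correct.

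Alternatively, and perhaps more cleanly, I would note this is a special case of the standard fact that \swap{}-based permutation routing must respect graph distances: each vertex's content can propagate information only to distance-$t$ vertices after $t$ layers, so any permutation moving a token a distance $d$ requires depth at least $d$. Taking the maximum distance $\diam(G)$ over all vertex pairs, and choosing the permutation that must move some token across a diameter-realizing pair, immediately gives $\rts{G} \geq \diam(G)$. This is the approach I would write up, as it isolates the single informational-locality principle doing all the work.
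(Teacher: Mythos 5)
Your proposal is correct and matches the paper's argument: the paper justifies \cref{diam_bound} with exactly this observation---that exchanging two diametrically separated vertices requires a \swap{} circuit of depth at least $\diam(G)$, since each layer of disjoint \swap{}s moves any token across at most one edge. Your write-up simply makes the token-tracking induction explicit, which the paper leaves as a one-sentence remark.
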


Note that this bound does \emph{not} apply to teleportation or LOCC routing.

\subsection{Upper bounds}
On any graph, a classical \swap{} algorithm can route on an $N$-vertex tree in depth $O(N)$ 
\cite{Zhang}. Recall that we only consider connected graphs, so we can always route on a spanning tree with \swap{}s in depth $O(N)$. 
We thus have the following upper bounds.

\begin{theorem}
\label{\swap{}n}
For any $N$-vertex connected graph $G$,
\begin{equation}
    \rts{G} = \bigO{N}
\end{equation}
\end{theorem}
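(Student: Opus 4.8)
The plan is to reduce routing on an arbitrary connected graph to routing on a spanning tree and then invoke the cited $\bigO{N}$ bound for trees. First I would note that, since $G$ is connected, it contains a spanning tree $T$ with $V(T) = V(G)$ and $E(T) \subseteq E(G)$; such a tree can be obtained, for instance, by a breadth-first search from any vertex.

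Next I would establish the monotonicity fact that deleting edges cannot decrease the \swap{}-routing time, i.e.\ $\rts{G} \leq \rts{T}$. The key observation is that any \swap{} circuit routing a permutation $\pi$ on $T$ uses only \swap{}s across edges of $T$, and since $E(T) \subseteq E(G)$, every such \swap{} is also a legal gate on $G$. Hence the very same circuit routes $\pi$ on $G$ in the same depth, so $\rts{G, \pi} \leq \rts{T, \pi}$ for every $\pi \in \mathcal{S}_N$; taking the maximum over $\pi$ gives $\rts{G} \leq \rts{T}$.

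Finally I would invoke the result of \cite{Zhang}, recalled just above the statement, that any permutation on an $N$-vertex tree can be routed with \swap{}s in depth $\bigO{N}$, so that $\rts{T} = \bigO{N}$. Combining the two facts yields $\rts{G} \leq \rts{T} = \bigO{N}$, which is the claim.

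As for obstacles, there is essentially no hard step: the substantive content---the $\bigO{N}$ tree-routing algorithm---is cited rather than reproved. The only point requiring care is the monotonicity argument, and specifically verifying that a protocol designed for the sparser edge set of $T$ remains valid on $G$. This is immediate from $E(T) \subseteq E(G)$, since restricting to a subset of the available edges can only constrain, never enlarge, the set of permissible \swap{} operations; one should, however, state it explicitly, as the same principle (routing time is monotone non-increasing under adding edges) will be convenient in later sections.
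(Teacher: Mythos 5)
Your proposal is correct and matches the paper's argument exactly: the paper also takes a spanning tree of the connected graph $G$ and applies the $\bigO{N}$ tree-routing result of \cite{Zhang}, with the observation that $E(T) \subseteq E(G)$ makes the tree protocol valid on $G$ left implicit. Your explicit statement of the edge-monotonicity of $\rtsop$ is a reasonable elaboration but not a departure from the paper's proof.
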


This bound also implies that
$\rtt{G} = \bigO{N}$ and 
$\rtl{G} = \bigO{N}$.

We can prove a tighter bound for
\emph{sparse} routing.
Let $\rts{G, k}$ denote the worst-case routing time on $G$ over permutations that move at most $k$ tokens.
Using reversals,~\cite{quantum_routing} gives a routing algorithm that takes depth $\bigO{\diam(G) + k^2}$. 
We improve this result, using \swap{}s with ancillas, to show the following.

\begin{restatable}[Sparse routing]{theorem}{srt}\label{th:sparse}
For any $N$-vertex connected simple graph $G$ and $k \in [N]$,
\begin{equation}
    \rts{G, k} = \bigO{\diam(G) + k}.
\end{equation}
\end{restatable}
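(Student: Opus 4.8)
The plan is to reduce the problem to routing on a spanning tree and then to move the $\le k$ displaced tokens as \emph{packets} through the ancilla layer, where they cannot collide with the stationary data qubits. First I would take a breadth-first spanning tree $T$ of $G$ rooted at an arbitrary vertex $r$. Every vertex lies at tree-distance at most $\diam(G)$ from $r$, so $T$ has height at most $\diam(G)$ and hence diameter at most $2\diam(G)$; since every edge of $T$ is an edge of $G$, any \swap{} schedule on $T$ is legal on $G$, and it suffices to route on $T$ in depth $\bigO{\diam(G)+k}$. Because $\pi$ fixes all but at most $k$ vertices, and the non-fixed points of a permutation are permuted among themselves, the set $A$ of moved tokens satisfies $|A|\le k$ and every destination $\pi(u)$ with $u\in A$ again lies in $A$.

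Next I would use one ancilla per vertex as a ``transport lane'': (i) swap each moved token at $u\in A$ from its data slot into $u$'s ancilla, (ii) route these $|A|\le k$ tokens entirely within the ancilla layer, each along the unique tree path from $u$ to $\pi(u)$ (up to $\mathrm{lca}(u,\pi(u))$ and back down), and (iii) swap each token from its destination ancilla back into the destination data slot. Steps (i) and (iii) cost depth $1$, and the stationary data qubits never move and so never cause collisions. Since moving a token between the ancillas of two neighbors is a single depth-$1$ operation in our model, step (ii) is exactly a store-and-forward packet-routing problem on $T$ with $\bigO{1}$-size buffers (the $\bigO{1}$ ancillas per vertex). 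Its dilation is at most $\diam(T)\le 2\diam(G)$, and since there are at most $k$ packets in total, at most $k$ of them cross any given tree edge, so the congestion is at most $k$.

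The crux, and the step I expect to be the main obstacle, is to schedule step (ii) in depth $\bigO{\text{congestion}+\text{dilation}}=\bigO{k+\diam(G)}$ using only $\bigO{1}$ buffers per vertex: a naive token-by-token schedule costs $\bigO{k\cdot\diam(G)}$ and the reversal-based schedule of \cite{quantum_routing} costs $\bigO{\diam(G)+k^2}$, so the entire gain comes from pipelining packets through shared bottleneck paths while parked packets wait in ancillas without blocking moving ones. I would obtain such a schedule by exploiting the tree structure: split every route at its lowest common ancestor into an ``up'' phase (all packets move rootward) and a ``down'' phase (all packets move leafward), and schedule each phase greedily under a fixed priority (e.g.\ farthest-destination-first). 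On a tree each phase is a merging/splitting process in which every edge is traversed at most once per packet, and a priority argument in the style of Leighton--Maggs--Rao bounds each phase by $\bigO{\text{congestion}+\text{dilation}}$; summing the two phases and the $\bigO{1}$ load/unload steps then yields the claimed $\bigO{\diam(G)+k}$. The delicate points to verify are that the priority schedule keeps every buffer occupancy $\bigO{1}$ and that the up and down phases compose without reintroducing a multiplicative blow-up.
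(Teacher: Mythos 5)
Your high-level architecture (park the $\le k$ moved tokens in the ancilla layer, route them as packets along a BFS spanning tree, unload) is reasonable, but the crux step you yourself flag---scheduling the tree packet-routing in depth $\bigO{\text{congestion}+\text{dilation}}$ with $\bigO{1}$ buffers per vertex---is precisely where the argument has a genuine gap, in two ways. First, your up-phase/down-phase split breaks the constant-buffer constraint: a packet finishing its rootward phase must park at its lowest common ancestor until the leafward phase begins, and arbitrarily many packets can share the same LCA (e.g.\ all $k$ routes crossing one central vertex), so up to $k$ packets would need to be buffered at a single vertex between phases. Second, and more fundamentally, the Leighton--Maggs--Rao $\bigO{C+D}$ guarantee is proved for edge-forwarding networks in which a node may use all incident edges simultaneously; in this paper's model a vertex participates in at most one \swap{} per time step (even ancilla-to-ancilla moves are simulated through the data slots), so \emph{edge} congestion plus dilation is not a valid lower-order bound. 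A star with $k$ leaves and a permutation among the leaves has edge congestion $2$ and dilation $2$, yet any schedule in this model needs $\Omega(k)$ depth because the center supports one token-move per step. Your target bound $\bigO{k+\diam(G)}$ may well survive with a vertex-capacitated scheduling argument, but no off-the-shelf congestion-dilation theorem supplies it, and your sketched greedy/priority analysis does not close it.

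The paper avoids this scheduling problem entirely with a different decomposition: gather, permute, scatter. After hiding unmarked tokens in ancillas, all marked tokens are moved toward an arbitrary vertex $r$ along shortest paths, with colliding tokens organized into \emph{trains} (pipelined convoys that advance in depth $5$ per step, \cref{trains}) that merge into \emph{token clusters}; a progress argument---in every cluster some train advances toward $r$ each round---shows the $k$ tokens come to occupy a connected $k$-vertex subtree in depth $\bigO{\diam(G)}$. The induced permutation on that $k$-vertex tree is then routed in depth $2k$ by the known tree-routing theorem of Alon--Chung--Graham/Zhang, and Phase 1 is run in reverse. In effect, the paper replaces your ``pipeline through bottlenecks in place'' step, which requires an LMR-type theorem adapted to the one-gate-per-vertex model, with ``compress to a region of size $k$, where routing is trivially $\bigO{k}$,'' for which only the elementary train mechanics are needed. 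If you want to rescue your route, the missing lemma you must prove is a vertex-capacitated, constant-buffer tree-scheduling bound of the form $\bigO{\text{vertex congestion} + \text{dilation}}$; absent that, the gather--permute--scatter strategy is the more economical path to \cref{th:sparse}.
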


\begin{proof}[Proof sketch]
Call all tokens $v$ with $\pi(v) \neq v$ \emph{marked}. There are $k$ marked tokens.
There are three main steps in our algorithm:
\begin{enumerate}
    \item Hide all unmarked tokens in the ancillas by performing \swap{}s. Route the $k$ marked tokens to span a tree subgraph in time $\bigO{\diam(G)}$.
    \item Permute the $k$ tokens on the tree subgraph, using the procedure from~\cite{Zhang}, in time $\bigO{k}$.
    \item Reverse the first step, thereby moving the $k$ tokens from the subgraph to the appropriate target locations in time $\bigO{\diam(G)}$. Restore the unmarked tokens from the ancillas. 
\end{enumerate}
See \cref{sparse_tree} for the full proof.
\end{proof}

\section{Faster permutations with teleportation}
\label{example}
The ability to perform teleportation immediately suggests possibilities for speedups over 
\swap{}-based routing. Swap-based routing must obey the diameter lower bound (\cref{diam_bound}), so permutations that involve 
long-range \swap{}s (e.g., between diametrically separated pairs of vertices) should be 
sped up by teleportation.

We define the \emph{teleportation advantage} for a specific permutation to quantify
this speedup:
\begin{equation}
    \advtg(G, \pi) \coloneqq \frac{\rts{G,\pi}}{\rtt{G, \pi}}.
\end{equation}

We now consider the following permutation on the path graph $P_N$:
$\pi_{\mathrm{diam}} = \{(1,N), (2,2), \dots, (N-1, N-1), (N,1)\}$ (see \cref{fig:line_diam}).

\begin{figure}
    \subfloat[Diameter-length permutation $\pi_\mathrm{diam}$ (shown by the red double-sided arrow) on the path graph  $P_N$ (shown in black).]{\label{fig:line_diam} \includegraphics[width=0.9\columnwidth]{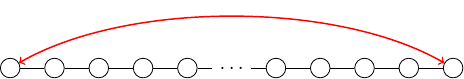}}
    \hfill 
    \subfloat[Rainbow permutation $\pi_\mathrm{rainbow}^\alpha$ (shown by red double-sided arrows) on the path graph $P_N$ (shown in black).]{\label{fig:rainbow} \includegraphics[width=0.9\columnwidth]{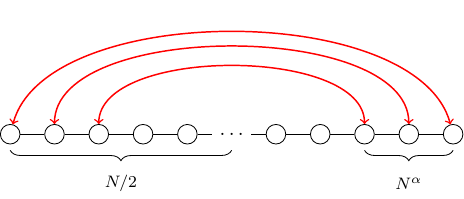}}
    
    \caption{Permutations on a 1D lattice}
\end{figure}

By the diameter lower bound, this permutation takes depth $\Omega(N)$ with \swap{}s. 
However, with teleportation it takes depth $\bigO{1}$, showing that
$\advtg(P_N, \pi_{\mathrm{diam}}) = \bigomega{N}$.

This further generalizes to permutations that require multiple long-range \swap{}s. 
For example, consider a \emph{rainbow} permutation $\pi_{\mathrm{rainbow}}^\alpha$, as depicted in \cref{fig:rainbow}.
This permutation involves performing $N^{\alpha}$ \swap{}s across a 1D lattice for some $\alpha\in [0,1]$. With \swap{}s, this takes
depth $\Theta(N)$ by the diameter bound, but with teleportation 
it takes depth $N^{\alpha}$, by a procedure that simply teleports
each pair into place sequentially. 
This gives a polynomial advantage: $\advtg(P_N, \pi_{\mathrm{rainbow}}^\alpha) = \bigO{N^{1-\alpha}}$.

These permutations allow speedups bounded by the diameter of the graph. Any single teleportation step can be simulated by \swap{}s in depth $\bigO{\diam(G)}$, by simply swapping along the shortest path between the initial qubit and the final destination. Intuitively, one might 
therefore expect that teleportation routing could achieve at most a diameter-factor speedup. 
However, there exist some graphs and permutations for which we can obtain even larger speedups. 
Teleportation speedups are not limited by the graph diameter since teleportation protocols
can utilize multiple longer paths together to avoid intersections.

\begin{figure}
    \includegraphics[width=0.6\columnwidth]{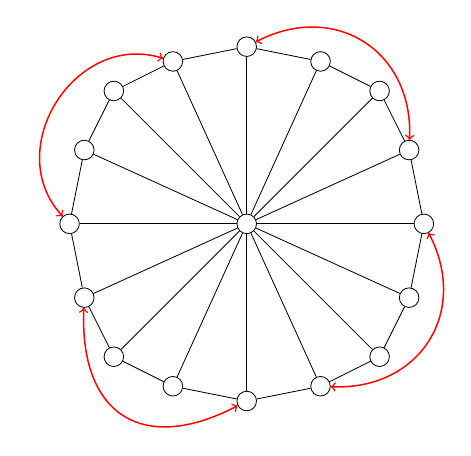}
    \caption{Permutation $\pi_{\mathrm{wheel}}^l$ (shown by red double-sided arrows) that exchanges $l$ pairs of vertices on the wheel graph $W_{N+1}$ (shown in black).}
    \label{fig:wheel}
\end{figure}

To illustrate this, consider the example of a 
\emph{wheel} graph $W_{N+1}$, as shown in \cref{fig:wheel}.
The $(N+1)$-vertex wheel graph, with central vertex $N+1$, has edges
\begin{equation}
    E(W_{N+1})=\{(u,v) \mid u-v = 1\Mod{N} \text{ or } v=N+1\}.
\end{equation}
The diameter of $W_{N+1}$ is 2.
On this graph, consider the permutation (shown in red in \cref{fig:wheel})
\begin{equation}
\begin{aligned}
    \pi_{\mathrm{wheel}}^l\coloneqq\{&(1, N/l), (N/l+1, 2N/l), \ldots, \\ 
    & (N-N/l+1, N)\}
\end{aligned}
\end{equation}
that exchanges $l$ pairs of vertices spaced along the ``rim'' of the wheel (assume $l \divides N$).
For \swap{}-based algorithms, this can be done in depth $\min\{3l, N/l-1\}$ by routing the qubits sequentially through the central vertex or routing them in parallel along the ``rim'', whichever is faster.

This is optimal up to constant factors, by the following reasoning. If there exists a data token that does not pass through the central node, the routing time must be at least $N/l-1$, which is the travel distance along the rim. On the other hand, if every data token passes through the central node, then there must be at least $2l$ steps in the algorithm. Therefore
\begin{equation}
    \rts{W_{N+1}, \pi_{\mathrm{wheel}}^l} \geq \min\{2l,N/l-1\}.
\end{equation}

However, in the teleportation routing model, this permutation can be performed in constant depth
by performing $l$ teleportations in parallel along non-intersecting paths on the wheel rim.
Therefore,
\begin{equation}
    \rtt{W_{N+1}, \pi_{\mathrm{wheel}}^l} = \bigO{1}.
\end{equation}
Setting $l = \sqrt{N/2}$, we obtain a maximum teleportation 
advantage $\advtg(W_{N+1}, \pi_{\mathrm{wheel}}^l) = \Theta(\sqrt{N})$ for this class of permutations, even though $\diam(W_{N+1}) = \bigO{1}$. Teleportation therefore enables super-diametric speedups.

\section{Teleportation Advantage}
\label{advantage}
While $\pi_{\mathrm{diam}}$, $\pi_{\mathrm{rainbow}}^\alpha$, and $\pi_{\mathrm{wheel}}^l $ allow for teleportation speedups, they are not
the worst-case permutations on their respective graphs. For example, consider the full reflection
on the line graph, i.e., a rainbow permutation with $\alpha = 1$. 
This permutation requires depth $\Theta(N)$ for 
both \swap{}- and teleportation-based routing. Similarly, on the wheel graph with an even number of vertices, the permutation $\pi$ with 
$\pi(i) = i+\lfloor N/2 \rfloor \bmod N$ for all $i \in [N]$ 
requires depth $\Theta(N)$
for both types of routing as well. 
Thus, although these graphs have teleportation speedups for specific permutations,
there is no separation between their \swap{} and teleportation routing numbers.

To compare the relative strength of the teleportation routing model to the \swap{}-based
routing model for all permutations, we aim to understand how much teleportation improves 
worst-case permutations. We measure the relative strength of the teleportation model by 
the separation in teleportation and \swap{}-based routing numbers,
which we define as the \emph{worst-case teleportation advantage}:
\begin{equation}
    \advtg(G) \coloneqq \frac{\rts{G}}{\rtt{G}}.
\end{equation}
Note that this is \emph{not} the worst-case ratio of routing numbers for a single specific
permutation, i.e., $\advtg(G)$ is not necessarily the same as $\max_{\pi} \advtg(G, \pi)$. (Indeed, as discussed above, these two quantities differ for the path and wheel graphs.) Instead,
$\advtg(G)$ can be thought of as the speedup teleportation provides for the general
task of routing on a particular graph in the worst case, rather than for implementing a specific permutation.
It also allows us to compare different graphs:
teleportation routing offers greater worst-case guaranteed
speedups on graphs with higher $\advtg(G)$.

\begin{figure}
    \includegraphics[width=\columnwidth]{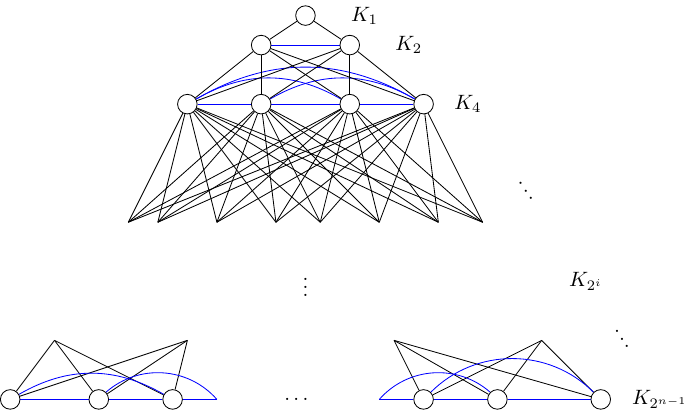}
    \caption{The graph $L(n)$. The black lines show edges between layers,
    while blue lines show edges within a layer (colored for visibility).}
    \label{fig:ladder}
\end{figure}

It is not immediately obvious that we should expect $\advtg$ to be greater than 1 for any graph.
However, we now describe a graph that
does offer a worst-case speedup for teleportation.
This graph, which we denote by $L(n)$ (with $N=2^n-1$ vertices), 
has $\advtg(L(n))=n=\log_2 (N+1)$.
The graph $L(n)$ (depicted in \cref{fig:ladder}) has
\begin{align}
    V(L(n)) &= \{(r, i) \mid r \in [n], i \in [2^{r-1}]\}
\end{align}
and 
\begin{align}
    E(L(n)) =
    \begin{aligned}[t]
    & \begin{aligned}[t]
        \{&((r, i_1), (r, i_2)) \mid \\
        &\quad r \in [n],\, i_1 < i_2 \in [2^{r-1}]\}
    \end{aligned}
    \\
    & \cup \begin{aligned}[t]
        \{&((r_1, i_1), (r_2, i_2)) \mid \\
        &\quad r_2-r_1 = 1,\, i_1,i_2 \in [2^{r-1}]\}.
    \end{aligned}
    \end{aligned}
\end{align}
In words, $L(n)$ is a ladder formed by arranging complete graphs $K_{2^k}$ for $k\in\{0,1,\ldots, n-1\}$ in horizontal layers, and then connecting every vertex in a given layer with every vertex one layer above or below. The total number of vertices in this graph is 
\begin{equation}
    N = \sum\limits_{k=0}^{n-1} 2^k = 2^n-1.
\end{equation}
The diameter of $L(n)$ 
is exactly $n-1=\log_2(N+1) -1$. \cref{diam_bound} then implies
\begin{equation}
    \rts{L(n)} = \Omega(\log(N)).
\end{equation} 

With teleportation, we show that routing can be performed in depth $\bigO{1}$. 
The key idea behind the teleportation protocol is that
every layer of $L(n)$ has one more node than all the layers above it together. 
This allows us to identify a unique node in each layer corresponding to
any node from a higher layer. We can then route tokens
by simply teleporting along the path formed by the unique nodes from each layer,
corresponding to the source vertex of the token to be routed.

This teleportation routing procedure establishes the following.

\begin{proposition}
$\rtt{L(n)} = O(1)$. 
\end{proposition}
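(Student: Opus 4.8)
The plan is to reduce the claim to a purely combinatorial statement about congestion. Since the model allows any set of teleportations along paths with $\bigO{1}$ intersections per vertex to be performed simultaneously in constant depth (using the $\bigO{1}$ ancillas at each vertex to build the entanglement channels), it suffices to show that for every permutation $\pi$ there is a family of paths $\{P_v\}_{v \in V(L(n))}$, with $P_v$ joining $v$ to $\pi(v)$, whose vertex-congestion (the maximum number of paths through any single vertex) is $\bigO{1}$. A single round of parallel teleportation along such a family then realizes $\pi$: the data token sitting at each intermediate vertex is left undisturbed and is itself teleported along its own channel, and since the endpoints $\{\pi(v)\}$ are distinct, the net effect is exactly the permutation. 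This yields $\rtt{L(n)} = \bigO{1}$ in a single round.

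To build the path family I would route every token ``down and back up'' through the wide bottom layer. Writing $r_v$ for the layer index of a vertex $v$, the path $P_v$ descends from the source $v$ to some node in the bottom layer $n$, moves within that layer (which is a complete graph, hence a single edge), and then ascends to the target $\pi(v)$; the descent visits exactly one node in each of the layers $r_v, r_v+1, \dots, n$ and the ascent one node in each of $r_{\pi(v)}, \dots, n$. Because consecutive layers are joined by a complete bipartite graph and each layer is internally complete, \emph{any} choice of one node per layer yields a valid path, so the nodes used in different layers can be selected completely independently. This decoupling is what lets me balance the load layer by layer.

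The congestion bound is where the defining property of $L(n)$ enters, and it is the step I expect to require the most care. Fix a layer $r < n$. A descending segment passes through layer $r$ exactly when its source lies in layers $1, \dots, r$, and the number of such sources is $\sum_{j=1}^{r} 2^{j-1} = 2^{r} - 1$; symmetrically, at most $2^{r}-1$ ascending segments pass through, namely those whose target lies in layers $1,\dots,r$. Since layer $r$ contains $2^{r-1}$ nodes --- that is, at least half of all the vertices in layers $1,\dots,r$ combined --- a balanced assignment of the passing segments to the nodes of layer $r$ (possible by the independence noted above) gives each node load $\bigO{1}$ from each of the two directions; adding the at most one source and one target hosted at each node keeps the per-vertex congestion constant. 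The bottom layer absorbs all $N$ tokens but has $(N+1)/2$ nodes, so the same balancing argument bounds its congestion by a constant as well.

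Putting these together, the constructed family has congestion $\bigO{1}$, a single parallel teleportation round implements $\pi$ in constant depth, and therefore $\rtt{L(n)} = \bigO{1}$. The main obstacle is precisely the balancing/congestion estimate: verifying that the counting identity ``each layer is as large as all higher layers together'' is exactly strong enough to spread the $\Theta(2^r)$ through-paths over the $2^{r-1}$ nodes of layer $r$ with only constant overlap, and confirming that the intermediate data qubits are carried correctly --- left untouched, and teleported along their own channels --- when all the teleportations are executed at once.
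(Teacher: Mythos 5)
Your proof is correct, but it takes a genuinely different route from the paper's. The paper constructs, for each pair $(u,\pi(u))$, a \emph{canonical path} of length equal to the graph distance: it assigns binary addresses so that layer $i+1$ holds the $(i+1)$-bit strings with a leading 1, descends from the endpoint in the higher layer through the uniquely determined ``shadow'' vertices with addresses $(10^{i-1}b(u))$, and bounds congestion by an injectivity argument---an interior vertex $v$ determines $u$ from $b(v)=(10^{i}b(u))$---to conclude that at most four paths meet at any vertex (two having $v$ as an endpoint, at most two passing through). You instead route every token down to the bottom layer and back up, exploiting the complete bipartite joins between adjacent layers so that the node visited in each layer is a free, independent choice, and then balance the at most $2^r-1$ descending (respectively ascending) segments crossing layer $r$ over its $2^{r-1}$ vertices. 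Both arguments hinge on the same structural fact---each layer is as large as all layers above it combined---but the paper uses it via explicit addressing and uniqueness, while you use it via counting and load balancing; your version avoids the case analysis, and would generalize to any stack of completely joined layers with geometrically growing widths, whereas the paper's yields an explicit deterministic family with distance-optimal paths and a slightly smaller congestion constant (four versus roughly six), neither of which matters for an $\bigO{1}$-depth bound. The two steps you flagged do go through: the balanced assignment exists trivially because the per-layer assignments are unconstrained and independent, giving at most $\lceil (2^r-1)/2^{r-1}\rceil = 2$ pass-throughs per node per direction plus one source and one target per node, and at most $2(2^n-1)/2^{n-1} < 4$ visits per node in the bottom layer; and intermediate data qubits are indeed untouched, since the repeater channels are built entirely from ancilla Bell pairs, so each data qubit participates only as the source of its own teleportation. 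One cosmetic point: a path's descending and ascending visits to the same layer might land on the same node; this either counts as two of that path's constant-many intersections at the node or can simply be shortcut, so the bound is unaffected.
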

\begin{proof}
For any permutation $\pi\in S_{2^n-1}$, we construct a set of paths  $\{P(u,\pi(u)) \mid u\in V(L(n))\}$ between every node and its destination such that each vertex of the graph belongs to at most four paths in the set.

Label every vertex in the graph with an $n$-bit address as follows. 
To every node in the subgraph 
$K_{2^i}$ (corresponding to layer $i+1$ of the ladder), assign a unique integer $u$ in the range $[2^i, 2^{i+1}-1]$. 
(Since the layer is a complete graph, the order within a layer is arbitrary.) 
Equivalently, we may refer to node $u$ by its binary representation $b(u)$, which is an $(i+1)$-bit string with a leading 1, i.e., of the form $b(u) = (1\ldots)$. 

For any vertex $u\in V$, define $r(u,i)$ to be the vertex whose address is $(10^{i-1}b(u))$,
i.e., the address of $u$ appended to a leading 1 $i$ places to the left. Note that $r(u,i+1)$ is adjacent to $r(u,i)$ and lies in the layer immediately below $r(u,i)$. Define $r(u,0)=u$.

Now, given two vertices $u,v$ separated by a distance $d$, define a \emph{canonical path} $P(u,v) = P(v,u)$ as the sequence of the following nodes: $(u, r(u,1),\ldots, r(u,d-1), v)$, where we assume $u < v$ without loss of generality. If $d=1$, then $P(u,v)=(u,v)$. We now show that for any permutation $\pi\in S_{2^n-1}$, the set of canonical paths $\{P(u,\pi(u)) \mid u\in V(L(n))\}$ intersects any vertex at most four times. 

Fix an arbitrary vertex $v$. By construction, $v$ lies in $P(v,\pi(v))$ and $P(v,\pi^{-1}(v))$. Now suppose a path $P(u,\pi(u))$ passes through $v \notin \{u,\pi(u)\}$. Then either $u < v< \pi(u)$ or $\pi(u) < v < u$. Without loss of generality,
we assume the former. Since $P(u,\pi(u))$ is canonical, $b(v) = (10^{i}b(u))$
for some $i\ge 0$. Suppose a different path $P(u', \pi(u'))$ also intersects $v$. 
Then there are two cases to consider: $u' < \pi(u')$ and $u' > \pi(u')$. In the first case, $b(v) = (10^{i'} b(u'))$. This is only possible when $i=i'$ and $u = u'$, which implies that $P(u',\pi(u')) = P(u,\pi(u))$ (giving one intersecting path at $v$). In the second case, the same reasoning implies that $u = \pi(u')$. In this case, there are two intersecting paths $P(u,\pi(u))$ and $P(\pi^{-1}(u), u)$ at $v$. Therefore, in addition to $P(v,\pi(v))$
and $P(v,\pi^{-1}(v))$, at most two other paths can intersect at $v$, giving a total of at most four paths.

Finally, construct one Bell pair for every edge in every canonical path $P(u,\pi(u))$,
using distinct local ancillas for every pair. The number of Bell pairs shared at any vertex
is at most $6 = O(1)$, requiring 6 local ancillas per vertex. 
Using the standard repeater protocol (\cref{fig:teleportation})
along each canonical path, one can then carry out simultaneous teleportation of 
all data qubits to their destination vertices $v\mapsto \pi(v)$ in constant depth. Therefore, any permutation of the qubits can be implemented
in depth $O(1)$. 
\end{proof}

\section{Bounding the Teleportation Advantage}
\label{bounds}
In the previous section, we described a graph with logarithmic teleportation advantage.
In this section, we examine limits on the teleportation advantage.
In order to understand the power of teleportation in general, we specifically 
aim to bound the \emph{maximum teleportation advantage}
\begin{equation}
\label{eq:advtg}
    \maxadvtg \coloneqq \max_{G} \advtg(G)
\end{equation}
over all graphs with a fixed number of vertices.
This quantity measures the maximum speedup teleportation 
can provide on worst-case permutations for \emph{any} graph.
We also show tighter bounds on the advantage for some common classes of graphs.

We immediately have an upper bound on $\advtg$ from \cref{\swap{}n}. Since any 
teleportation algorithm must have depth $\bigomega{1}$, and a \swap{} algorithm
can implement any permutation in depth $\bigO{N}$, we have 
\begin{equation}
    \maxadvtg = \bigO{N}.
\end{equation}

We now show a tighter bound.

\subsection{Advantage for general graphs}
Combining \cref{cut_bound} and \cref{diam_bound}, we have
\begin{equation}
\rts{G} \geq \max\{\frac{2}{c(G)}-1, \diam(G)\}
\end{equation}
We now consider the relationship between $\diam(G)$ and $\frac{1}{c(G)}$. Intuitively, increasing
the diameter while keeping $N$ constant `stretches' the graph, tightening bottlenecks.
This causes $c(G)$ to decrease.
Similarly, eliminating bottlenecks in the graph requires adding more edges across cuts, 
thereby increasing the connectivity of the graph and reducing the diameter. 
We thus expect that graphs with higher diameter will have higher $\frac{1}{c(G)}$,
and graphs with small $\frac{1}{c(G)}$ will have small diameter. 
We can express this relation more precisely as follows.

\begin{restatable}{lemma}{diamBound}\label{lem:diamBound}
For any connected simple graph $G$,
\begin{equation}\label{eq:diamBound}
\diam(G) \leq 2\frac{\log\frac{N}{2}}{\log \left(1 + c(G)\right)} + 2.
\end{equation}
\end{restatable}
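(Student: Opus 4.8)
The plan is to use a ball-growth argument driven directly by the vertex expansion. For a vertex $v$ and radius $r$, let $B_r(v) = \{u \in V(G) : d(u,v) \leq r\}$ be the metric ball, where $d$ is graph distance. The key observation is that the vertex boundary $\delta B_r(v)$ is \emph{exactly} the set of vertices at distance $r+1$ from $v$, so $|B_{r+1}(v)| = |B_r(v)| + |\delta B_r(v)|$. Whenever $|B_r(v)| \leq N/2$ we have $\min\{|B_r(v)|, |\overline{B_r(v)}|\} = |B_r(v)|$, so \cref{def:vertex_expansion} gives $|\delta B_r(v)| \geq c(G)\,|B_r(v)|$, and hence the per-step growth bound
\[
|B_{r+1}(v)| \geq \bigl(1 + c(G)\bigr)\,|B_r(v)|.
\]

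First I would iterate this bound. Starting from $|B_0(v)| = 1$, as long as every intermediate ball has at most $N/2$ vertices we obtain $|B_r(v)| \geq (1+c(G))^r$. Let $r^*(v)$ be the smallest radius with $|B_{r^*(v)}(v)| > N/2$ (assuming $N \geq 2$, so that $r^*(v) \geq 1$). Then $|B_{r^*(v)-1}(v)| \leq N/2$, the iterated bound applies up through radius $r^*(v)-1$, and we get $(1+c(G))^{r^*(v)-1} \leq N/2$. Taking logarithms yields $r^*(v) \leq \frac{\log(N/2)}{\log(1+c(G))} + 1$.

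Next I would convert this radius bound into a diameter bound via the standard ``two large balls must intersect'' trick. For any pair of vertices $u, v$, both $B_{r^*(u)}(u)$ and $B_{r^*(v)}(v)$ contain strictly more than $N/2$ vertices, so their sizes sum to more than $N = |V(G)|$; being subsets of $V(G)$, they cannot be disjoint and must share a common vertex $w$. The triangle inequality then gives $d(u,v) \leq d(u,w) + d(w,v) \leq r^*(u) + r^*(v)$. Combining with the previous paragraph, $d(u,v) \leq 2\frac{\log(N/2)}{\log(1+c(G))} + 2$, and maximizing over $u, v$ gives the claimed bound on $\diam(G)$.

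The step requiring the most care is establishing the per-step growth factor correctly: one must recognize that the vertex boundary of a metric ball is precisely the next sphere (so the recursion $|B_{r+1}| = |B_r| + |\delta B_r|$ holds), and that the expansion inequality $|\delta X| \geq c(G)|X|$ is only guaranteed when $|X| \leq N/2$. This is exactly why the iteration must stop at radius $r^*(v)-1$ rather than being pushed all the way to the diameter. The remaining intersection argument is routine and is what produces both the factor of $2$ and the additive constant $2$ in \eqref{eq:diamBound}.
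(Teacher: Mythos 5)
Your proof is correct and follows essentially the same route as the paper's: an expansion-driven ball-growth iteration (your $r^*(v)$ is the paper's horizon $\rho(v)+1$, and your sphere/ball recursion is the paper's circle/disk decomposition), followed by the same two-large-balls intersection and triangle-inequality step. The details you flag as delicate---that the vertex boundary of a ball is the next sphere, and that expansion only applies while $|B_r(v)| \leq N/2$---are handled correctly and match the paper's argument.
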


\begin{proof}
See \cref{diamcproof}.
\end{proof}

One might expect graphs with
large diameter to allow large speedups, since the diameter lower bound only applies
to \swap{} routing. However, as illustrated by
\cref{lem:diamBound}, graphs with large diameter also have tight bottlenecks, 
and therefore, by \cref{cut_bound}, are not likely to permit large speedups.

We now show our main results bounding the advantage. Our main technical result
bounds the advantage in terms of the diameter of the graph. We note that this bound also applies to the separation between \swap{}s and teleportation routing for any permutation, and not just the worst-case separation.
\begin{lemma}\label{advdiam}
$\advtg(G) =\bigO{\sqrt{N} + \diam(G)}$.
\begin{proof}
We construct a \swap{}-based protocol that can simulate
a single round of teleportation in depth 
$\bigO{\sqrt{N} + \diam(G)}$, thereby upper bounding the teleportation advantage.

A single round of a teleportation protocol performs teleportation along a set of paths. These paths must intersect no more than
a constant number of times per vertex, since there are only a constant number of 
ancillas per vertex.

For all paths from the teleportation protocol of length at most $\sqrt{N}$, we \swap{} along the paths in parallel. Since each vertex only has a constant number of paths going through
it, a qubit can move through every vertex in constant depth.
Therefore, these \swap{}s can be performed in depth $\bigO{\sqrt{N}}$.

For an $N$-vertex graph, the number of paths of length
at least $l$ that intersect at most a constant number of times is in
$O(N/l)$.
Therefore, since each long path corresponds to a single token, after routing along all paths of length at most $\sqrt{N}$, we have
$\bigO{\sqrt{N}}$ tokens left to route. By \cref{th:sparse}, 
this can be done in depth $\bigO{\sqrt{N} + \diam(G)}$.

We can thus simulate each teleportation round in depth
$\bigO{\sqrt{N} + \diam(G)}$, which completes the proof.
\end{proof}
\end{lemma}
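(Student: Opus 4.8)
The plan is to bound $\advtg(G)$ by showing that, for an arbitrary permutation $\pi$, there is a \swap{}-based protocol implementing $\pi$ in depth $\bigO{\sqrt{N} + \diam(G)}\cdot\rtt{G,\pi}$. Starting from an optimal teleportation protocol for $\pi$ of depth $\rtt{G,\pi}$, I would simulate it one round at a time: if a single round of teleportation can be reproduced by \swap{}s in depth $\bigO{\sqrt{N} + \diam(G)}$, then composing the simulations of all $\rtt{G,\pi}$ rounds gives $\rts{G,\pi} \le \bigO{\sqrt{N} + \diam(G)}\cdot\rtt{G,\pi}$, and hence $\advtg(G) = \bigO{\sqrt{N} + \diam(G)}$. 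This also yields the per-permutation separation claimed in the statement, so the entire problem reduces to simulating one round.

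A single teleportation round is specified by a family of paths, one per token being moved, and because we allow only $\bigO{1}$ ancillas per vertex these paths have $\bigO{1}$ congestion: every vertex lies on at most a constant number of them. I would split the paths at the threshold $\sqrt{N}$. For the paths of length at most $\sqrt{N}$, I would \swap{} the tokens toward their destinations along the paths simultaneously. Since each vertex is crossed by only $\bigO{1}$ paths, the \swap{}s meeting at any vertex can be scheduled in $\bigO{1}$ depth, so a token advances one edge every $\bigO{1}$ steps and every short path is completed in depth $\bigO{\sqrt{N}}$.

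For the long paths (length exceeding $\sqrt{N}$) I would use a counting argument: by the $\bigO{1}$ congestion bound the total length of all paths in the round is $\bigO{N}$, so there are only $\bigO{N/\sqrt{N}} = \bigO{\sqrt{N}}$ paths longer than $\sqrt{N}$. Each such path carries a single token, so at most $\bigO{\sqrt{N}}$ tokens remain to be routed after the short-path phase. I would then invoke the sparse-routing result \cref{th:sparse} with $k = \bigO{\sqrt{N}}$ to move these remaining tokens in depth $\bigO{\diam(G) + \sqrt{N}}$. Combining the two phases simulates the whole round in depth $\bigO{\sqrt{N} + \diam(G)}$, as required.

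The main obstacle is making the short-path phase faithful. Teleportation moves each endpoint token to the opposite end while leaving the intermediate tokens untouched, whereas \swap{}ing along a path also shifts every intermediate token, and when paths share vertices these shifts interfere. I would need to argue carefully that $\bigO{1}$ congestion lets me schedule the parallel \swap{}s so that each token crosses every shared vertex in $\bigO{1}$ depth, and that the displacement of intermediate tokens either cancels out or is absorbed into the sparse-routing phase, so that the net permutation realized by the simulation matches the round. Verifying that the short- and long-path phases compose consistently—that the tokens handed to \cref{th:sparse} are exactly those left out of place by the short-path \swap{}s—is the most delicate bookkeeping step.
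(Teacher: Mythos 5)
Your proposal is correct and follows essentially the same route as the paper's proof: simulate each teleportation round with \swap{}s, route the paths of length at most $\sqrt{N}$ in parallel using the $\bigO{1}$ congestion guarantee, observe by counting that only $\bigO{\sqrt{N}}$ longer paths (hence tokens) remain, and finish with \cref{th:sparse} in depth $\bigO{\sqrt{N}+\diam(G)}$. The faithfulness obstacle you flag at the end is settled by the same ancilla trick underlying \cref{alg:train}: a traveling token can pass through an occupied vertex in constant depth by temporarily stashing the resident token in its local ancilla, so intermediate tokens suffer no net displacement and nothing needs to be absorbed into the sparse-routing phase.
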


Combining our results, we now have a bound on the advantage for \emph{any} graph.

\begin{theorem}
\label{adv_bound}
$\maxadvtg = \bigO{\sqrt{N \log N}}$.
\end{theorem}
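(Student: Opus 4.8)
The plan is to combine \cref{advdiam}, which already gives $\advtg(G) = \bigO{\sqrt{N} + \diam(G)}$, with a complementary bound that is strong precisely when the diameter is large. The guiding intuition, foreshadowed by \cref{lem:diamBound}, is that a large diameter forces a small vertex expansion $c(G)$, which by the isoperimetric lower bound forces a large teleportation routing time $\rtt{G}$, and hence a small advantage. So the two regimes balance: small-diameter graphs are handled by \cref{advdiam}, and large-diameter graphs are handled by a bottleneck argument.

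First I would derive a second upper bound on the advantage from expansion alone. Combining $\rts{G} = \bigO{N}$ (\cref{\swap{}n}) with the chain $\rtt{G} \geq \rtl{G} \geq \frac{2}{c(G)} - 1 \geq \frac{1}{c(G)}$ from \cref{cut_bound} (using $c(G) \leq 1$), I obtain
\begin{equation}
    \advtg(G) = \frac{\rts{G}}{\rtt{G}} = \bigO{N\, c(G)}.
\end{equation}
Next I would convert this into a diameter bound using \cref{lem:diamBound}. Since $\log(1 + c(G)) = \bigomega{c(G)}$ for $c(G) \in [0,1]$ (the chord bound for the concave function $\log(1+x)$), \cref{eq:diamBound} rearranges to $c(G) = \bigO{\frac{\log N}{\diam(G)}}$, and substituting into the bound above yields the complementary estimate $\advtg(G) = \bigO{\frac{N \log N}{\diam(G)}}$.

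Finally I would balance the two bounds with a case analysis on $D \coloneqq \diam(G)$. If $D \leq \sqrt{N \log N}$, then \cref{advdiam} gives $\advtg(G) = \bigO{\sqrt{N} + \sqrt{N \log N}} = \bigO{\sqrt{N \log N}}$. If instead $D > \sqrt{N \log N}$, the complementary bound gives $\advtg(G) = \bigO{\frac{N \log N}{\sqrt{N \log N}}} = \bigO{\sqrt{N \log N}}$. In either case $\advtg(G) = \bigO{\sqrt{N \log N}}$, and maximizing over all $N$-vertex graphs gives $\maxadvtg = \bigO{\sqrt{N \log N}}$. The two regimes cross over at exactly $D = \sqrt{N \log N}$, which is what produces the final exponent.

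The main obstacle is the middle step: translating the expansion bound into the diameter bound cleanly. This hinges on the $\log(1 + c(G))$ versus $c(G)$ comparison and on absorbing the additive constant and small-diameter corner cases in \cref{eq:diamBound} without weakening the asymptotics (for very small $D$ the complementary bound may be vacuous, but there \cref{advdiam} already yields $\bigO{\sqrt{N}}$, so the case analysis still closes). Everything after that is a routine optimization of $\min\{\sqrt{N} + D,\ \frac{N \log N}{D}\}$ over $D$.
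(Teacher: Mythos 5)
Your proposal is correct and follows essentially the same route as the paper: both combine the bound $\advtg(G) = \bigO{N c(G)}$ (from \cref{\swap{}n} and \cref{cut_bound}) with \cref{advdiam} and the diameter--expansion trade-off of \cref{lem:diamBound}, using $\log(1+c(G)) = \bigomega{c(G)}$ to simplify the logarithm. The only difference is presentational: you eliminate $c(G)$ and balance the two bounds over $\diam(G)$ with an explicit two-case split at $\diam(G) = \sqrt{N\log N}$, whereas the paper keeps both bounds as functions of $c(G)$ and argues the minimum is maximized at $c(G) = \sqrt{(\log N)/N}$---your case analysis is, if anything, a slightly more airtight way to finish the same optimization.
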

\begin{proof}
First, combining \cref{\swap{}n} and \cref{cut_bound}, we have
\begin{equation}
    \advtg(G) = \bigO{N \cdot c(G)}.
    \label{eq:n_cut}
\end{equation}
Combining this bound with the bound from \cref{advdiam}, we have
\begin{equation}
\label{eq:advmin}
    \advtg(G) \leq \min \left\{\bigO{N \cdot c(G)}, 
    \bigO{\sqrt{N} + \diam(G)} \right\}.
\end{equation}
We know that $c(G) > 0$.
Using the fact that $\log(x) \ge 1 - 1/x$ for $x > 0$, we have 
\begin{equation}
    \frac{1}{\log(1+c(G))} \le \frac{1}{c(G)} + 1 = \bigO*{\frac{1}{c(G)}},
\end{equation}
where in the last equality we used $c(G) \leq 1$.
Applying this to \cref{lem:diamBound} and \cref{eq:advmin}, we have
\begin{equation}
    \label{eq:advmin2}
    \advtg(G) \leq\min \left\{ \bigO{N \cdot c(G)},
    \bigO*{\sqrt{N} + \frac{\log(N)}{c(G)}} \right\}.
\end{equation}
Recall the definition of the maximum teleportation advantage from \cref{eq:advtg}:
\begin{equation}
    \maxadvtg \coloneqq \max_{G} \advtg(G).
\end{equation}
Therefore,
\begin{equation}
\begin{aligned}
    \maxadvtg \leq \max_{G} \min \bigg\{ & \bigO{N \cdot c(G)}, \\
      & \bigO*{\sqrt{N} + \frac{\log(N)}{c(G)}} \bigg\}.
\end{aligned}
\end{equation}
As $c(G)$ varies, the two bounds in the minimum vary inversely. The first bound,
from \cref{eq:n_cut}, is monotonically increasing in $c(G)$ for $c(G) \in (0,1]$. 
The second bound is monotonically decreasing in $c(G)$ for $c(G) \in \left( 0,\frac{\log N}{\sqrt{N}} \right]$.
Note that when $c(G) \sim 1/N$ (recall that $c(G) \geq 2/N$), the first bound is smaller, while when $c(G) \sim \frac{\log N}{\sqrt{N}}$,
the second bound is smaller. The largest minimum of the two bounds is thus obtained when they are equal.

The minimum of the two bounds is thus maximized when $c(G) = \sqrt{(\log N) /N}$. Note that even if a graph with $c(G) = \sqrt{(\log N) /N}$ does not exist, any other value of $c(G)$ will result in a smaller right-hand side of \cref{eq:advmin2}. With $c(G) = \sqrt{(\log N) /N}$, we obtain 
\begin{equation} 
    \maxadvtg = \bigO*{\sqrt{N\log N}}
\end{equation}
as claimed.
\end{proof}

This bound applies to any graph, and is thus independent of the diameter of the graph. 
Therefore, this 
result shows that in graphs with diameter $\omega(\sqrt{N\log N})$, 
we cannot obtain a routing time separation between teleportation- and \swap{}-based routing that is proportional to the diameter.


Next we show tighter bounds for a few common families of graphs.

\subsection{Grids}
For $d$-dimensional grids (i.e, $P_{n}^{\square d}$, the $d$-fold Cartesian product of the path graph $P_n$, with $N = n^d$ vertices), 
the vertex cut bound (\cref{cut_bound}) gives
\begin{equation} \label{eq:grid}
    \rtl{P_{n}^{\square d}} \geq \frac{2}{c(P_{n}^{\square d})} - 1 \geq n -1,
\end{equation}
where $c(P_{n}^{\square d}) \leq 2/n$ follows from considering a hyperplane that bisects the grid along one dimension.
From~\cite{Alon1994}, we have
\begin{equation}
    \rts{G_1 \square G_2} = 2\rts{G_1} + \rts{G_2}.
\end{equation}
Therefore, the \swap{} routing time of a $d$-dimensional grid is $\bigO{dN^{1/d}} = \bigO{d n}$. For constant $d$,
this saturates the cut bound in \cref{eq:grid}. Therefore, there is no worst-case speedup from either teleportation or full LOCC,
i.e, $\advtg(P_{n}^{\square d}) = 1$.

\subsection{Expander graphs}
We bound the advantage for spectral expander graphs to be $\poly(\log N)$.
The \emph{(normalized) Laplacian} of a graph, $G$, is defined as
\begin{equation}
    \lapl_{u,v} =
    \begin{cases}
        1 &\text{if } u = v \\
        -\frac{1}{\sqrt{d_v d_u}} &\text{if } (u,v) \in E(G) \\
        0 &\text{otherwise},\\
    \end{cases}
\end{equation}
where $d_v$ is the degree of vertex $v$.
The matrix $\lapl$ is symmetric and positive
such that we can order its eigenvalues as
$0 = \lambda_0 \le \lambda_1 \le \dots \le \lambda_{n-1}$.
We write $\lambda(G)$ for $\lambda_1$ of the Laplacian of $G$.
\emph{Spectral expander graphs} are graphs of bounded degree with $\lambda(G) = \bigomega{1}$.
For a comprehensive introduction to spectral graph theory, consult~\cite{Chung1996}.

To bound the advantage for spectral expander graphs, we
first use the following upper bound on the swap-based routing number.
Let $d_{*} \coloneqq 
\frac{\max_{v \in V} d_v}{\min_{v \in V} d_v}$ denote the degree ratio of a graph.

\begin{theorem}[\cite{quantum_routing}]
\label{expander}
For any graph $G$ and permutation $\pi$, 
\begin{equation}
    \rts{G, \pi} = \bigO*{\frac{d_{*}}{\lambda(G)^2}\log^2 N}.
\end{equation}
\end{theorem}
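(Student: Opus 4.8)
The plan is to bound the swap routing number by reducing permutation routing to a \emph{multicommodity flow} whose congestion and dilation are both controlled by the spectral gap $\lambda(G)$, and then to convert that flow into an explicit schedule of $\swap$ matchings. Concretely, I would route each token through a randomly chosen intermediate vertex in the style of Valiant's two-phase routing: first send the token at $v$ to a uniformly random intermediate $w_v$, then send it from $w_v$ to $\pi(v)$, carrying each leg along a path traced out by the lazy random walk on $G$. The point of the random intermediate is to destroy any adversarial structure in $\pi$, so that the congestion of the resulting paths is governed only by the stationary behavior of the walk rather than by the particular permutation we are asked to implement.

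For the \emph{dilation}, I would use that the lazy walk on a graph with spectral gap $\lambda(G)$ mixes in $\bigO{\log N / \lambda(G)}$ steps, so each leg can be taken to have length $D = \bigO{\log N / \lambda(G)}$. For the \emph{congestion}, I would count expected vertex visits: the $N$ walks, each of length $D$, together deposit $ND$ visits distributed according to the stationary measure $d_v / 2\abs{E(G)}$, so the load at a vertex $v$ is roughly $D\,d_v/\overline{d}$ with $\overline{d}$ the average degree. Bounding $d_v/\overline{d}$ by the degree ratio $d_*$ gives a per-vertex congestion $C = \bigO{d_* \log N / \lambda(G)}$, which should hold simultaneously at all vertices after a concentration-plus-union-bound argument. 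Finally, I would round this fractional flow down to one integral path per token, preserving $C$ and $D$ up to constant factors by standard flow-decomposition arguments.

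The last ingredient is scheduling. I would argue that any system of token paths with congestion $C$ and dilation $D$ can be executed by $\swap$ matchings in depth $\bigO{CD}$: process the $D$ layers of the paths in turn, and at each layer resolve the at most $C$ tokens contending for a given vertex or edge using $\bigO{C}$ rounds of disjoint swaps. This yields a total depth of $\bigO{CD} = \bigO{d_* \log^2 N / \lambda(G)^2}$, matching the claimed bound, with the two factors of $1/\lambda(G)$ coming respectively from the dilation (mixing time) and the congestion.

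The hard part will be twofold. First, making the congestion bound fully rigorous: the random-walk visit counts must concentrate simultaneously at \emph{every} vertex, and the degree-ratio factor $d_*$ has to be tracked carefully through the nonuniform stationary measure rather than swept into the expander assumption. Second, the $\swap$ model forbids two tokens from occupying the same vertex, so the naive ``packet flowing into an empty buffer'' picture is not literally valid; I would handle this either by the standard reduction showing that the routing-via-matchings number is within a constant factor of the associated packet-routing schedule, or by directly exploiting the available ancillas as temporary buffers, as in the proof of \cref{th:sparse}. Verifying that the $\bigO{CD}$ scheduler genuinely respects the matching (and token-occupancy) constraints at each step is the most delicate point of the argument.
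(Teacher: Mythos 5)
Note first that the paper does not prove this statement: it is imported from \cite{quantum_routing}, where the proof adapts the Alon--Chung--Graham matching-based argument for expanders (recursively partitioning the vertex set and using the spectral gap, via an expander-mixing/flow argument, to exchange the tokens that must cross each cut, with $\bigO{\log N}$ levels of recursion each costing $\bigO{(d_*/\lambda^2)\log N}$ rounds of matchings, generalized to irregular graphs through the degree ratio $d_*$). Your Valiant-style two-phase random-walk routing is therefore a genuinely different program, and congestion-plus-dilation arguments of this kind can in principle produce bounds of this shape. However, as written there are three genuine gaps. The largest is the scheduling step, which you correctly flag but do not resolve: the layer-synchronous ``$\bigO{CD}$'' scheduler is invalid in this model, because after executing layer $t$ in lockstep, up to $C$ tokens can share the same layer-$(t+1)$ vertex, while a vertex together with its $\bigO{1}$ ancillas can hold only $\bigO{1}$ tokens. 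Delaying tokens to respect occupancy creates queues of length up to $C$, and batching the paths into low-congestion groups fails too: the conflict graph of the paths has degree up to $\sim CD$, so greedy coloring gives depth $\bigO{CD^2}$, not $\bigO{CD}$. The rigorous tool here is the Leighton--Maggs--Rao theorem (any path system with congestion $C$ and dilation $D$ admits an $\bigO{C+D}$ schedule with constant-size queues, which the ancillas can simulate) --- but note that this would yield $\bigO{d_* \log N/\lambda}$, \emph{strictly stronger} than the stated theorem, which should itself make you suspicious that your accounting loses factors elsewhere. It does, in the two remaining gaps.

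Second, concentration: your expected per-vertex load $D\,d_v/\bar d$ is right, but a single walk can visit a fixed vertex up to $D$ times, so the summands in your Chernoff-plus-union-bound are not $\bigO{1}$-bounded; a Bernstein bound gives $C = \bigO{d_* D + D\log N}$, inflating the final bound by at least a $\log$ factor unless you separately control return probabilities within the mixing time --- which requires re-using the spectral gap and is not automatic. Third, normalization: the paper's $\lambda(G)$ is the gap of the \emph{unnormalized} Laplacian, and the lazy walk mixes in $\bigO{(d_{\max}/\lambda)\log N}$ steps, not $\bigO{\log N/\lambda}$; propagating this, your product $CD$ becomes $\bigO{d_* d_{\max}^2 \log^2 N/\lambda^2}$, which matches the stated $d_*/\lambda^2$ only for bounded-degree or near-regular graphs --- sufficient for the paper's application to spectral expanders, but not the theorem as stated for general $G$. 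Two smaller points: the ``flow rounding'' step is vacuous (random-walk paths are already integral), and sending each token to a random intermediate is not a permutation, which is harmless only because your tokens never rest at the intermediate vertex --- that should be said explicitly, since resting there would violate unit occupancy exactly as in the scheduling gap.
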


Combining this result with the lower bound of \cref{cut_bound}, we immediately get
\begin{equation}
    \advtg(G) =  \bigO*{\frac{d_{*} c(G) \log^2 N}{\lambda(G)^2}}.
\end{equation}
Thus graphs with $\lambda(G) = \bigomega{1}$ and $d_* = \bigO{1}$ (such as spectral expanders)
have at most a polylogarithmic advantage.

\subsection{Hypercubes}
The \swap{}-based routing time for a $d$-dimensional
hypercube $Q_d$ is~\cite{Alon1994,hypercubes}
\begin{equation}
    \rts{Q_d} = \Theta(d).
\end{equation}
Since $|V(Q_d)| = N = 2^d$,
\begin{equation}
    \rts{Q_d} = \log N.
\end{equation}
Now, we will show that $c(Q_d) = \bigtheta{\frac{1}{\sqrt{d}}}$.
In a hypercube, Hamming balls (i.e., sets of all points with Hamming weight $\leq r$ for some integer $r$) have the smallest boundary of all sets of a given size~\cite{harper}.
Taking the Hamming ball of radius $d/2$ as $X$, we have $|X| = 2^{d-1}$
and $|\delta X| = \binom{d}{d/2} = \bigtheta{2^d/\sqrt{d}}$.
Therefore, $c(G) =\bigtheta{1/\sqrt{d}}$.
Using \cref{cut_bound}, we have
$\rtt{G} = \bigomega{\sqrt{d}} = \bigomega{\sqrt{\log N}}$.
Teleportation thus offers
at most an $\bigO{\sqrt{\log N}}$ advantage on hypercubes.


\subsection{Other graphs}

The cyclic butterfly graph $B_r$ has been proposed as a 
constant-degree interaction graph that allows for fast circuit synthesis~\cite{cowtan, butterfly}.
Each of the $N = r2^r$ vertices is labelled $(w,i) \in \{0,1\}^r \times [r]$.
Vertices $(w,i)$ and $(v, i+1 \bmod r)$ are connected if $w=v$ 
or if $w$ and $v$ differ by exactly one bit in the $i$th position.
The cyclic butterfly has diameter $\bigO{\log N}$, degree 4, and 
$\rtsop(B_r) = \bigO{\log N}$~\cite{butterfly}.

We now show that the $\bigO{\log N}$ protocol is optimal even for teleportation routing on the cyclic butterfly graph, so 
$\advtg(G) = \bigO{1}$. 
Bipartition the vertices into sets $X, \overline{X}$ such that $X$ consists of all rows with bit $j = 0$ for some $j$, and $\overline{X}$ consists of all rows with bit $j = 1$. For this partition, $|X| = r2^{r-1}$ and $|\delta X| = 2^r$, so $c(G) \leq 2/r$.
Since $r = \Theta(\log N)$, $c(G) = \bigO{\frac{1}{\log N}}$, so from 
\cref{cut_bound}, $\advtg(G) = \bigO{1}$. 

The complete graph $K_N$ has $\rtsop(K_N) = \bigO{1}$, and therefore has $\advtg(K_N)=1$.

Finally, graphs with poor expansion properties---in particular, with vertex expansion
$c(G) = \bigO{\frac{\poly(\log N)}{N}}$---have at most polylogarithmic advantage by \cref{eq:n_cut}.
\section{Discussion}
\label{discussion}
In this paper, we have used quantum teleportation to speed up the task of
permuting qubits on graphs. We have shown examples of specific types of permutations
that can be sped up by teleportation. Further, we have shown an example of a graph
that exhibits a worst-case teleportation routing speedup of $\log N$. 
Our main technical result (\cref{adv_bound}) is a general upper bound of $\bigO{\sqrt{N\log N}}$ 
on the worst-case routing speedup.
We also show that many practical architectures cannot implement arbitrary interactions with low overhead, even with fast LOCC (unlike previous work which only considered unitary evolution). Such a negative result provides useful constraints for the design of quantum devices, suggesting that designing new architectures may prove fruitful.

Our work leaves an open question on whether there exists a graph with $\advtg(G) = \omega(\log N$).
Such a graph cannot be a spectral expander graph as per \cref{expander}.
From \cref{lem:diamBound}, we know that a graph with large diameter will have poor expansion properties (small $c(G)$) and therefore will not have a large teleportation advantage as per \cref{eq:n_cut}. Some candidate graphs for a superlogarithmic teleportation advantage are those with $c(G) \approx \sqrt{(\log N) /N}$. Such graphs may come closer to achieving a teleportation advantage given by the upper bound of \cref{adv_bound}.

Furthermore, we believe that there should exist a tighter upper bound than~\cref{adv_bound} on the maximum teleportation advantage for any graph.
This is one particularly interesting direction in resolving the advantage of a teleportation protocol over \swap{}s.
There could be more sophisticated methods that give tighter bounds by exploiting parallelism. 
A possible approach to tightening this bound would be to show a \swap{} protocol
that performs routing from multiple teleportation
rounds in parallel, since \swap{} paths need not obey the strict conditions of
teleportation paths (namely, allowing only a constant number of path
intersections per vertex).

We have primarily focused on the teleportation model of routing. However,
teleportation routing is a special case of the more general LOCC model of routing. We currently
do not know whether the full power of LOCC can provide a super-constant speedup over
teleportation routing. This is analogous to another open question, namely whether routing
with arbitrary 2-qubit gates---or even with arbitrary bounded 2-qubit Hamiltonians---can provide a super-constant speedup over \swap{}-based routing~\cite{quantum_routing}. 

Herbert~\cite{herbert} posed the question of establishing to
what extent ancillas can be used to reduce the routing depth.
Rosenbaum~\cite{rosenbaum} showed an $\bigO{1}$ routing protocol
on $N$ qubits with $\bigO{N^2}$ ancillas (i.e., an advantage of $\bigO{N}$), while systems without ancillas cannot perform LOCC or teleportation routing, and therefore cannot exhibit any speedups. We have investigated an intermediate regime, and have shown that a linear number of ancillas cannot allow for speedups
greater than $\bigO{\sqrt{N \log N}}$. It remains an open question to further investigate the 
space-time tradeoff between the number of ancilla qubits and the routing time.

We assume noiseless circuits, but in the presence of noise the performance of teleportation protocols depends directly on the fidelity of the required resource Bell pairs. We are primarily interested in ways to use teleportation for routing, and Bell pairs are necessary for this process.  Our current teleportation routing model does not distinguish between routing over long or short paths, but a more comprehensive model of routing could prioritize shorter paths as they will be less error prone without error correction.  Alternatively, we could use a purification protocol \cite{bennett_purification} to prepare high-fidelity Bell pairs at the cost of additional ancillas and overhead, or we could encode our state in an error-correcting code \cite{kitaevTopo} to suppress the error rate when operating between nodes. If operating in a quantum network, we can make use of protocols generalizing entanglement swapping from Bell basis measurements to $n$-qubit GHZ states to improve the performance of repeater protocols in lossy quantum networks \cite{Patil_entanglement_generation}. Alternatively, we can prepare a high-fidelity Bell pair by performing multiple repeater protocols along different paths in parallel \cite{Pant_2019} or using multiplexers on each edge \cite{Lee2022}. 

A more general task than routing is to perform
unitary synthesis, i.e, decompose a particular unitary into 2-qubit gates that
can be applied on our locality-constrained qubits. It remains an open question
to understand how much unitary synthesis can be sped up by using LOCC with
a linear number of ancillary qubits. Previous work has shown an $\bigomega{N}$
speedup for implementing fanout~\cite{Pham2013} and preparing GHZ and W states~\cite{Piroli}, and an $\bigomega{\sqrt{N}}$ speedup for preparing toric code states~\cite{Piroli}, which takes time $\bigomega{\sqrt{N}}$ without LOCC~\cite{bravyi}. Previous work has also shown how measurements of cluster states can be used to efficiently prepare long-range entanglement~\cite{spt} and states with exotic topological order~\cite{verresen}.
In principle, LOCC could provide superlinear speedups for unitary synthesis, as we
currently have no upper bounds on the advantage for arbitrary unitaries.

\section*{Acknowledgements}
We thank Andrew Guo, Yaroslav Kharkov, Samuel King, and Hrishee Shastri for helpful discussions.
D.D.\ acknowledges support by the NSF GRFP under Grant No.~DGE-1840340, an LPS Quantum Graduate Fellowship, and the U.S. Department of Energy, Office of Science, Office of Advanced Scientific Computing Research, Quantum Testbed Pathfinder program (award number DE-SC0019040).
A.B.~and A.V.G.~were supported in part by ARO MURI,  DoE ASCR Quantum Testbed Pathfinder program (awards No.~DE-SC0019040 and No.~DE-SC0024220), NSF QLCI (award No.~OMA-2120757), DoE ASCR Accelerated Research in Quantum Computing program (award No.~DE-SC0020312), NSF STAQ program, DARPA SAVaNT ADVENT, AFOSR, AFOSR MURI, and U.S.~Department of Energy Award No.~DE-SC0019449. Support is also acknowledged from the U.S.~Department of Energy, Office of Science, National Quantum Information Science Research Centers, Quantum Systems Accelerator.  
A.M.C.\ and E.S.\ acknowledge support by the U.S.\ Department of Energy,
Office of Science, Office of Advanced Scientific Computing Research,
Quantum Testbed Pathfinder program (award number DE-SC0019040) 
and the U.S.\ Army Research Office (MURI award number W911NF-16-1-0349).
E.S.\ acknowledges support from an IBM PhD Fellowship
and the U.S. DoE, Office of Science NQISRC, Quantum Science Center for finalizing and publishing the paper.
\bibliography{sources}

\appendix

\section{Sparse routing}
\label{sparse_tree}
Previous work~\cite{quantum_routing} shows an $\bigO{\diam(G)+k^2}$ \swap-based routing algorithm to route $k$ vertices
on a graph $G$. In this Appendix, we show that using \swap s with a constant number of ancillas per qubit, this result can be improved to be linear in $k$.

\begin{figure}
\subfloat[Line 2]{\includegraphics[width=0.3\columnwidth]{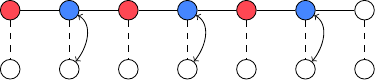}}
\quad
\subfloat[Line 4 ]{\includegraphics[width=0.3\columnwidth]{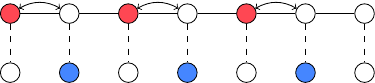}}
\quad
\subfloat[Line 5]{\includegraphics[width=0.3\columnwidth]{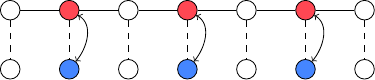}}
\hfill 
\subfloat[Line 7]{\includegraphics[width=0.3\columnwidth]{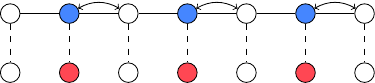}}
\quad
\subfloat[Line 8]{\includegraphics[width=0.3\columnwidth]{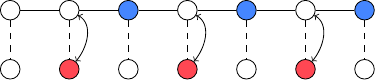}}
\quad
\subfloat[Advanced train]{\includegraphics[width=0.3\columnwidth]{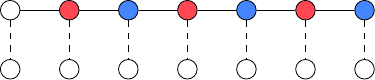}}
\caption{Advancing a train of tokens, as in \cref{alg:train}. Blank vertices hold state $\ket{0}$. The dashed lines represent connections with the local ancilla qubits.}
\label{fig:trainline}
\end{figure}

\begin{figure}
\subfloat[The blue train advances one vertex at a time.]{\includegraphics[width=0.45\columnwidth]{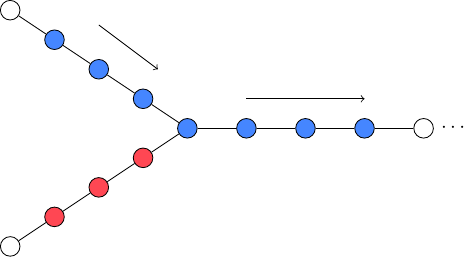}}
\subfloat[The red train waits for the blue train, but remains in the same connected token cluster.]{\includegraphics[width=0.45\columnwidth]{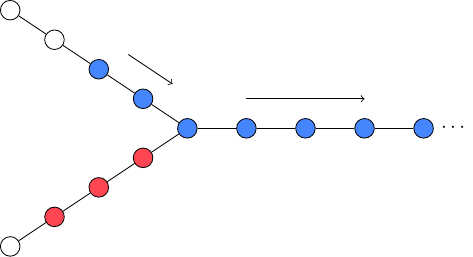}}
\hfill 
\subfloat[The red train is concatenated to the blue train.]{\includegraphics[width=0.45\columnwidth]{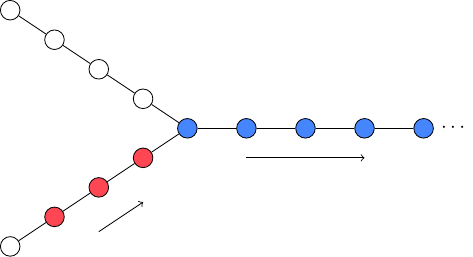}}
\subfloat[Once concatenated, they move as a single train.]{\includegraphics[width=0.45\columnwidth]{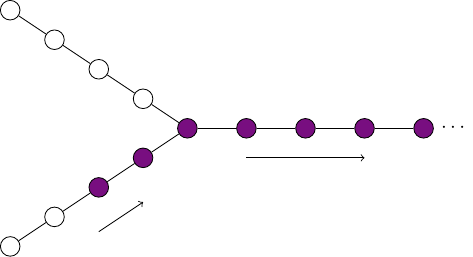}}
\caption{Joining trains of tokens.}
\label{fig:trains}
\end{figure}

We first introduce the following definitions.

\begin{definition}[Null token]
A \emph{null token} is a dummy token that can be routed anywhere.
In quantum routing, 
all ancillas are initialized with a null token in state $\ket 0$.
\end{definition}

\begin{definition}[Train]
A \emph{train} is a set of non-null tokens along a path subgraph of $G$.
\end{definition}

\begin{algorithm}[tbp]
\caption{Advance a train}\label{alg:train}
\Input{%
    Train $T$
        from vertices $0$ to $l-1$ on a path. Vertex $l$ has a null token. The data token on vertex $i$ is data$(i)$, and the token on the corresponding ancilla is ancilla$(i)$.
}
\ParFor{$i = 1, 3, \dots$}{%
    \swap{} data$(i)$ with ancilla$(i)$\;
}
\ParFor{$i = 0,2, 4, \dots$}{%
    \swap{} data($i$) with data($i+1$)\;
    \swap{} data($i+1$) with ancilla$(i+1)$\;
}
\ParFor{$i = 1,3,\dots$}{%
    \swap{} data($i$) with data($i+1$)\;
    \swap{} data($i$) with ancilla($i$)\;
}
\end{algorithm}

Now we show how a train can \emph{advance}, i.e., translate by 1 along its length.

\begin{lemma}\label{trains}
A train can advance in depth 5.
\end{lemma}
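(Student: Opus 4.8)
The plan is to verify that the stated procedure (Algorithm~\ref{alg:train}) both advances the train by one vertex and runs in only five layers of gates. Write $D_i$ and $A_i$ for the data and ancilla slots at vertex $i$, and let $t_i$ denote the token initially held in $D_i$ for $i\in\{0,\dots,l-1\}$, with $D_l$ holding a null token. The idea I would emphasize is that a naive sequential shift of the train costs depth $\Theta(l)$ because neighboring \swap{}s on the data line conflict; the local ancillas serve as scratch space to break these conflicts so the whole length-one translation runs in parallel. Concretely, the first block lifts every odd-indexed token into its ancilla, freeing the odd data slots; the second block advances the even-indexed tokens into the freed odd slots while returning the odd tokens to their data slots; and the third block advances the odd tokens one step further and drops the stashed even tokens into their final positions.

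For correctness I would follow each token through the blocks. An even-indexed token $t_i$ is carried $D_i\to D_{i+1}$ by the data-data \swap{} of block~2 and then $D_{i+1}\to A_{i+1}$ by the following data-ancilla \swap{}, and is finally returned $A_{i+1}\to D_{i+1}$ by block~3, giving net displacement $D_i\to D_{i+1}$. An odd-indexed token $t_{i+1}$ is lifted $D_{i+1}\to A_{i+1}$ in block~1, returned $A_{i+1}\to D_{i+1}$ by block~2, and advanced $D_{i+1}\to D_{i+2}$ by block~3, for net displacement $D_{i+1}\to D_{i+2}$. Thus every token moves forward by exactly one vertex, with the null at $D_l$ providing the hole into which the leading token advances (and getting displaced into a discardable ancilla slot).

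For the depth bound I would check that the \swap{}s inside each block act on disjoint pairs, so each is a single layer: the even-indexed data-data \swap{}s are the disjoint edges $(0,1),(2,3),\dots$, the odd-indexed ones are the disjoint edges $(1,2),(3,4),\dots$, and every data-ancilla \swap{} touches only a vertex and its own ancilla. Hence block~1 is one layer and blocks~2 and~3 are two layers each (a data-data layer followed by a data-ancilla layer), giving $1+2+2=5$ in total. This counts each data-ancilla \swap{} as depth~$1$; under the faster convention in which such gates are free, the bound only improves.

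The step I expect to require the most care is the boundary behavior at the two ends of the path, which depends on the parity of $l$. The trajectory argument needs the leading edge $(l-1,l)$ to be \swap{}ed in block~3 (an odd index) so the top token lands in $D_l$ rather than being stranded in $A_l$; this holds exactly when $l$ is even. I would handle the opposite parity by the mirror-image schedule that exchanges the roles of the even and odd indices in blocks~1--3, and I would likewise confirm that the vertex vacated at the trailing end is left holding only discardable ancilla or null content, as required by the surrounding sparse-routing construction.
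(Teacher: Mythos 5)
Your proof is correct and follows essentially the same route as the paper's, which simply verifies that the three-block schedule of \cref{alg:train} translates the train by one vertex in $1+2+2=5$ \swap{} layers (the paper's proof defers the token-tracing to \cref{fig:trainline}). Your parity concern at the head is legitimate but dissolves in the paper's setting: since the path subgraph extends from the head along the shortest path toward $r$, the vertices beyond the head hold null tokens, so running the parallel loops over the entire path makes the extra boundary \swap{}s harmless for either parity of $l$, with no mirrored schedule needed.
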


\begin{proof}
Suppose we want to move a train of length $l$ towards some vertex $r$.
We define the \emph{head} of a train as the token on the vertex closest to $r$, and the \emph{tail} as the token on the vertex furthest from $r$.
Consider the path subgraph spanned by the vertices the train lies on as well as the vertices of the shortest path from the head to $r$.
Let the tail lie on vertex $0$, and head lie at vertex $l-1$.
We use \cref{alg:train} to advance the train such that after 5 time steps, the tail of the train is at vertex $1$ and the head at $l$. This procedure is depicted in \cref{fig:trainline}.
\end{proof}

\begin{algorithm}[tbp]
\caption{Token cluster movement}\label{alg:tokencluster}
\Input{%
    Set of token clusters; vertex $r$.
}
In each token cluster, advance the train with head closest to $r$ by 1 using \cref{alg:train}.\;
If any two token clusters are adjacent, join them as a single token cluster.\;
If the head of any train $T_1$ is adjacent to the tail of another train $T_2$, join them as a single train with the head of $T_2$ and the tail of $T_1$.\;
\end{algorithm}

We now define a token cluster.

\begin{definition}[Token cluster]
A \emph{token cluster} is a set of trains such that each train contains a token on a vertex that is adjacent to a vertex with a token from another train in the token cluster.
\end{definition}
Token clusters move as in \cref{fig:trains}, by \cref{alg:tokencluster}.
Once a train joins a token cluster, it remains connected and part of the token cluster. 

We now prove \cref{th:sparse}, which we reproduce here for clarity.

\srt*

\begin{proof}
Let us call the $k$ tokens on vertices
\begin{equation}
    \{v \in V(G) \mid \pi(v) \ne v\}
\end{equation}
\emph{marked tokens},
and let the remaining token be \emph{unmarked tokens}.
Our algorithm involves three phases.

\textbf{Phase 1:}  
First, we \swap{} the unmarked tokens
into local ancilla qubits and store them there for the duration of routing. 
Every vertex that initially held an unmarked token now holds a null token.

Next, we select some vertex $r$ of $G$ arbitrarily (in practice, selecting $r$ to be at the center of the graph may provide constant-factor speedups). We now move all marked tokens towards vertex $r$ by \swap ping
along the shortest possible paths, until the tokens span a set of vertices
forming a tree connected to $r$. 
The tokens are moved in parallel, and when their paths intersect, the tokens move as trains, as per \cref{trains}.
When the paths of multiple trains intersect, they form a token cluster, and can be moved as in \cref{alg:tokencluster}.

Any given train is at most $\diam(G)$ distance away from $r$ at the start of Phase 1.
At every time step, a train either advances by 1 vertex towards $r$, or is part of a token cluster in which another train closer to $r$ advances.
Therefore, every token cluster becomes connected to $r$ in depth
$\bigO{\diam(G)}$, since in every token cluster, at least 1 train must reach $r$ in depth $\bigO{\diam(G)}$.
In particular, in $\bigO{\diam(G)}$ depth, all non-null tokens must span a tree containing $r$,
and thus have merged into a single token cluster.

\textbf{Phase 2:} Now we have $k$ vertices spanning a tree $T$.
Suppose token $v$ is mapped to the vertex $t(v)$ in $T$ after Phase 1.
Note that the token $u$ that was originally at $t(v)$ must also be a
marked token, and therefore must now lie in $T$.
We route the tokens on $T$ 
according to a permutation $\pi'$ such that 
\begin{equation}
    \pi'(t(v)) \coloneqq t(\pi(v))
\end{equation}
for all $t(v) \in V(T)$,
in depth $2k$~\cite{Alon1994}.

\textbf{Phase 3:} We now simply perform Phase 1 in reverse.
During Phase 1, the marked token at $u$ was mapped to $t(u)$. Therefore, after Phase 3,
the token at $t(u)$ is mapped to vertex $u$.
Therefore, the following mapping is applied to all vertices with marked tokens:
\begin{equation}
    u \xrightarrow{\text{Phase 1}} t(u) \xrightarrow{\text{Phase 2}} t(\pi(u))
    \xrightarrow{\text{Phase 3}} \pi(u).
\end{equation}

The combined depth of the three phases is at most
$\bigO{k+\diam(G)}$.
\end{proof}

\section{Proof of diameter-expansion trade-off}\label{diamcproof}

In this appendix, we prove \cref{lem:diamBound}, adapting Proposition 3.1.5 from~\cite{kowalski_expander} to vertex neighborhoods rather than edge neighborhoods.

\diamBound*
\begin{proof}
For any vertex $v\in V$, denote by $C(v,k)$ the set of all vertices that are at distance $k$ from $v$. We call $C(v,k)$ a circle of radius $k$ centered on $v$. Note that $C(v,k)\cap C(v,k') = \emptyset$ when $k\neq k'$. Next, define
\begin{equation}
    D(v,k) \coloneqq \bigcup\limits_{r=0}^{k} C(v,r)
\end{equation}
to be the disk of radius $k$ centered on $v$. Observe that $C(v,0) = D(v,0) = \{v\}$. Finally, choose an integer $\rho(v)$ such that $|D(v,\rho(v))|\le N/2 < |D(v,\rho(v)+1)|$ and call it the \emph{horizon} of $v$. For any vertex, a horizon exists and is an integer between 0 and $\diam(G)-1$.

By definition, for all $k\le \rho(v) + 1$, we have
\begin{equation}
    |C(v,k)|\ge c(G)\cdot |D(v,k-1)|.
\end{equation}
Applying this inequality gives
\begin{align}
    |D(v,\rho(v))| &= |C(v,\rho(v))| + |D(v, \rho(v)-1)|\\
        &\ge (1+c(G))|D(v, \rho(v)-1)|.
\end{align}
Recursing until we reach the base case $D(v, 0) = \{v\}$,
we obtain
\begin{equation}
    N/2 \ge (1+c(G))^{\rho(v)},
\end{equation}
giving
\begin{equation}\label{eq:horizon}
    \rho(v) \le \frac{\log(N/2)}{\log(1+c(G))}.
\end{equation}
Next, for any two vertices $u,v\in V$, let $d(u,v)$ denote the distance between $u,v$. We claim that
\begin{equation}
    d(u,v) \leq \rho(u) + \rho(v) + 2 .
\end{equation}
To see this, note that by definition, $|D(u,\rho(u)+1)|> N/2$ and $|D(v,\rho(v)+1)|> N/2$, which implies that $D(u,\rho(u)+1)\bigcap D(v,\rho(v)+1) \neq \emptyset$ by the pigeonhole principle. Therefore, there exists a vertex $t$ such that $d(u,t)\leq \rho(u)+1$ and $d(t,v)\leq \rho(v)+1$. By the triangle inequality, we have $d(u,v) \leq \rho(u) + \rho(v) + 2$ as claimed. 

Finally, we use \cref{eq:horizon} and maximize the distance over all vertex pairs $u,v$ to get
\begin{equation}
    \label{eq:final_diam_c}
    \diam(G) \le \frac{2\log(N/2)}{\log(1+c(G))} + 2
\end{equation}
as claimed.
\end{proof}

\end{document}